\documentclass[10pt,twocolumn,journal]{IEEEtran}
\usepackage{amsfonts}
\usepackage{amsbsy}
\usepackage{amsmath,amssymb}
\usepackage{amsthm}
\usepackage{times}
\usepackage{graphicx}
\usepackage{stfloats}
\usepackage{enumerate}
\usepackage[usenames]{color}
\usepackage[dvips]{pstcol}
\usepackage{epstopdf}
\usepackage{cite}
\usepackage{amsmath}
\usepackage{amssymb}
\usepackage{amsfonts}
\usepackage{graphicx}
\usepackage{epsfig}
\usepackage{psfrag}
\usepackage{xcolor}
\usepackage{amsfonts, bm}
\usepackage{epstopdf}
\usepackage{cite}
\usepackage{color}
\usepackage{xcolor}
\usepackage{verbatim}
\usepackage{subfig}
\usepackage{algorithm}
\usepackage{algorithmic}
\newtheorem{theorem}{Theorem}

\input epsf

\begin{document}
\newtheorem{lemma}{Lemma}
\newtheorem{proposition}{Proposition}

\title{Secrecy Rate Maximization of RIS-assisted SWIPT Systems: A Two-Timescale Beamforming Design Approach} 
\author{Ming-Min Zhao,
  \IEEEmembership{Member,~IEEE,} Kaidi Xu, Yunlong Cai,
  \IEEEmembership{Senior Member,~IEEE}, Yong Niu,
  \IEEEmembership{Member, IEEE,} and Lajos Hanzo,
  \IEEEmembership{Fellow, IEEE} \thanks{ M. M. Zhao and Y. Cai
    are with the College of Information Science and Electronic
    Engineering, Zhejiang University, Hangzhou 310027, China, and also with the Zhejiang Provincial Key Laboratory of Information Processing, Communication and Networking (IPCAN), Hangzhou 310027, China (e-mail:
    zmmblack@zju.edu.cn; ylcai@zju.edu.cn). K. Xu is with the Department of Electrical and Electronic Engineering at Imperial College London, SW7 2BX London, U.K. (e-mail: k.xu21@imperial.ac.uk). Y. Niu
    is with the State Key Laboratory of Rail Traffic Control and
    Safety, Beijing Jiaotong University, Beijing 100044, China
    (e-mail: niuy11@163.com). L. Hanzo is with the Department of Electronics and Computer Science,
    University of Southampton, U.K. (e-mail: lh@ecs.soton.ac.uk). \emph{(Corresponding author: Yunlong Cai)}

The work of M. M. Zhao was supported in part by the National Key R\&D Program of China under Grant 2021YFA1003304, in part by the National Natural Science Foundation of China under Grant 62001417, and in part by the Zhejiang Provincial Natural Science Foundation of China under Grant LQ20F010010. 
The work of Y. Cai was supported in part by the National Natural Science Foundation of China under Grants 61971376 and 61831004, and the Zhejiang Provincial Natural Science Foundation for Distinguished Young Scholars under Grant LR19F010002. 
L. Hanzo would like to acknowledge the financial support of the 
Engineering and Physical Sciences Research Council projects EP/W016605/1 
and EP/P003990/1 (COALESCE) as well as of the European Research 
Council's Advanced Fellow Grant QuantCom (Grant No. 789028).
} }

\maketitle

\begin{abstract}
Reconfigurable intelligent surfaces (RISs) achieve high passive beamforming gains for signal enhancement or interference nulling by dynamically adjusting their reflection coefficients. Their employment is particularly appealing for improving both the wireless security and the efficiency of radio frequency (RF)-based wireless power transfer. Motivated by this, we conceive and investigate a RIS-assisted secure simultaneous wireless information and power
transfer (SWIPT) system designed for information and power transfer from a base station (BS) to an information user (IU) and to multiple energy users
(EUs), respectively. Moreover, the EUs are also potential eavesdroppers that may overhear the communication between the BS and IU. We adopt \emph{two-timescale} transmission for reducing the signal processing complexity as well as channel training overhead, and {\color{black}aim for maximizing the average worst-case secrecy rate achieved by the IU}. This is achieved by jointly optimizing the \emph{short-term} transmit beamforming vectors at the BS (including information and energy beams) as well as the \emph{long-term} phase shifts at the RIS, under the energy harvesting constraints considered at the EUs and the power constraint at the BS. The stochastic optimization problem formulated is non-convex with intricately coupled variables, and is non-smooth due to the existence of multiple EUs/eavesdroppers. No standard optimization approach is available for this challenging scenario. To tackle this challenge, we propose a smooth approximation aided stochastic successive convex approximation (SA-SSCA) algorithm. Furthermore, a low-complexity heuristic algorithm is proposed for reducing the computational complexity without unduly eroding the performance. Simulation results show the efficiency of the RIS in securing SWIPT systems. The significant performance gains achieved by our proposed algorithms over the relevant benchmark schemes are also demonstrated.
\end{abstract}
\begin{IEEEkeywords}
Intelligent reflecting surface, physical layer security, simultaneous wireless information and power transfer, passive beamforming, two-timescale.
\end{IEEEkeywords}


\section{Introduction}
\subsection{Motivation and State-of-the-art}
The increasing growth in the number of wireless Internet-of-Things (IoT) devices, as well as their thirst for ubiquitous communication connectivity and perpetual energy supply continues to inspire researchers to conceive ever more spectral- and energy-efficient solutions. To tackle this challenge, simultaneous wireless information and power transfer (SWIPT) has attracted a huge upsurge of interest, and has been studied comprehensively as a promising solution, where the dual use of radio frequency (RF) signals is exploited \cite{Zhang2013_SWIPT, Lu2015, Zhao2016SWIPT, Zeng2017}. However, an energy user (EU) usually requires much higher received power than that required by an information user (IU), hence the efficiency of wireless
power transfer (WPT) for EUs is considered as the performance bottleneck of practical SWIPT systems. Furthermore, as the EUs are typically deployed close to the base station (BS) for efficient energy reception, this also causes severe wireless security issues since the information users (IUs) can be located farther from the BS and the information intended for the IUs can be easily intercepted by the potential malicious EUs, who tend to have access to stronger signals than the IUs \cite{Liu2014Secrecy_SWIPT, Shi2015secure_SWIPT, Khandaker2019secure_SWIPT}. Thus, developing appropriate security
measures is a crucial issue for SWIPT systems. To tackle this difficulty, sophisticated methods/strategies have been proposed in the literature, such as high-performance beamforming designs \cite{Liu2014Secrecy_SWIPT, Shi2015secure_SWIPT, Wang2020_secure_SWIPT}, constructive interference exploitation and artificial noise injection\cite{Khandaker2019secure_SWIPT}, etc.

Recently, reconfigurable intelligent surface (RIS)-assisted wireless communication has drawn considerable research attention as it is capable of improving the spectral- and/or energy-efficiency \cite{Wu2019Magazine, Renzo2019, Basar2019, Hou2020_RIS}. By dynamically adjusting the reflection coefficients (including amplitudes and phase shifts) of the massive low-cost passive reflecting elements based on the channel state information (CSI), RIS achieves fine-grained passive beamforming gains and thereby beneficially ameliorates the signal propagation for improving the communication performance. In contrast to the existing advanced wireless technologies such as massive multiple-input multiple-output (MIMO) systems, ultra-dense networkinging (UDN) and millimeter wave (mmWave) communications, which improve the communication performance by employing several active nodes/antennas/RF chains. RISs usually only contain passive elements that incur lower hardware/energy cost and less interference contamination. Moreover, RISs are also quite different
from conventional relays. Although they cannot amplify or regenerate the received signal, they achieve full-duplex signal reflection both without self-interference and without processing noise. These compelling advantages of RIS have spurred a lot of interest recently in developing prototypes~\cite{Tan2016ICC, arun2019rfocus, Tang2019ChinaCom, Dai2020access}. Furthermore, numerous global companies, such as NTT DoCoMo, Greenerwave, and Pivotal Commware, are focusing on commercializing RIS-type technologies to reduce the cost, size, weight, and power consumption of existing systems.

The new research paradigm of RIS-aided wireless communication has motivated numerous innovative efforts devoted to their design, and diverse aspects of RISs have been investigated, such as passive beamforming designs \cite{Wu2018_journal, Huang2019, Han2019TVT, zhao2019intelligent, zhao2020TTS}, channel estimation \cite{wang2019channel, You2020JSAC, wei2021channel, Zhou2021_RIS_CE, Ma2021_RIS_CE}, RIS-aided orthogonal frequency division multiplexing (OFDM) systems \cite{Yang2020OFDM_IRS}, RIS-aided mmWave communications \cite{Wang2020mmWave}, physical layer security \cite{Guan2020Secure, yu2019robust, Shen2019secure, WangIRS_Secure_NOMA}, SWIPT \cite{wu2019joint, Pan2020_IRS_SWIPT,Tang2020IRS_SWIPT}, and so on.  In particular, the authors of~\cite{Liu_secure_SWIPT_IRS, Zhou2020IRSSecure_SWIPT} exploited the potential of RIS in improving the SWIPT performance as well as guaranteeing the information security. Specifically, in \cite{Liu_secure_SWIPT_IRS}, artificial noise was injected by the BS and an alternating optimization based iterative algorithm was proposed for maximizing the energy efficiency under specific energy harvesting constraints and minimum (maximum) signal-to-interference-plus-noise ratio (SINR) constraints of the IUs (EUs). In \cite{Zhou2020IRSSecure_SWIPT}, angle-aware user cooperation scheme was adopted, and a two-phase transmission protocol was considered. Specifically, in the first phase the BS avoids direct signal transmission to the attacked user and the other users harvest RF energy by exploiting the popular power-splitting technology. Then in the second phase, the other users cooperate to forward useful signals to the attacked user with the assistance of a RIS. While the above-mentioned authors focus on exploiting RISs for enhancing the
wireless information/energy transmission performance and for ensuring physical layer security, the full knowledge of the instantaneous CSI
(either perfect or imperfect) including the BS-RIS and RIS-user links is required. In practice, however, since the number of reflecting elements is usually large, obtaining the full instantaneous CSI is a challenge, which may incur higher channel training overhead, and using excessive time for channel estimation will lead to reduced user transmission rate due to the limited time left for data transmission. Besides, designing the RIS reflection coefficients in each channel realization based on the full instantaneous CSI also increases the signal processing complexity, since usually iterative algorithms are required for RIS optimization \cite{Liu_secure_SWIPT_IRS, Zhou2020IRSSecure_SWIPT}.

\subsection{Main Contributions}
Against the above background, we study a RIS-assisted secure SWIPT system and adopt the \emph{two-timescale} transmission protocol \cite{zhao2019intelligent,zhao2020TTS} for reducing the signal processing complexity and channel training overhead. Specifically, a RIS is deployed to assist in the information and power transfer from a multi-antenna BS to a single-antenna IU and multiple single-antenna EUs, respectively, while in the meantime guaranteeing the information security of the IU.\footnote{{\color{black}Note that the proposed algorithm is still applicable also for the more general  case associated with multiple IUs}, but here we consider the scenario of single-antenna users. Further extension to the multi-antenna case is set aside for our future work.}

\begin{itemize}
\item To this end, {\color{black}we maximize the average worst-case secrecy rate achieved by the IU upon jointly optimizing the \emph{short-term} transmit beamforming vectors at the BS (including the information and energy beams) and \emph{long-term} phase shifts at the RIS, under the energy harvesting constraints at the EUs and the total transmit power constraint at the BS.}

\item We design the \emph{long-term} RIS reflection phase shifts based on the channel statistics, while the \emph{short-term} transmit beamforming vectors at the BS are designed based on the effective instantaneous CSI with fixed RIS phase shifts (the corresponding channel dimension is much smaller than that of the full instantaneous CSI). Note that the solution structure of the short-term beamforming design problems is taken into consideration during long-term phase-shift optimization, thus the long-term and short-term variables are jointly optimized.

\item The stochastic optimization problem formulated is non-convex and non-smooth with intricately coupled variables, which is difficult to solve, and the existing algorithms are not applicable. To tackle this difficulty, we propose a smooth approximation aided stochastic successive convex approximation (SA-SSCA) algorithm that can be deployed in an online fashion, where the original problem is first approximated by a smooth problem and then decomposed into a long-term subproblem and a family of short-term subproblems.

  \item We propose an efficient concave-convex procedure based block coordinate descent (CCCP-BCD) algorithm that
  is integrated in the SA-SSCA algorithm to solve the short-term subproblems and prove its convergence to stationary solutions. Then, we show that the long-term subproblem can be easily solved and its closed-form optimal solution is available. Besides, a heuristic algorithm is presented, which can achieve performance comparable to the SA-SSCA algorithm, but with lower complexity.

  \end{itemize}
  
\subsection{Organization}
The rest of the paper is organized as follows. In Section \ref{sec_system_model}, we present the system model and problem formulation. In Section \ref{sec_proposed_algorithm}, we propose an efficient SA-SSCA algorithm to solve the formulated problem. In Section \ref{section_low_complexity}, a low-complexity heuristic algorithm is further proposed. In Section \ref{Section_Simulation}, numerical results are provided to evaluate the performance of the proposed algorithms. Finally, we conclude the paper in Section \ref{sec_conclusions}.

\emph{Notations:} Scalars, vectors and matrices are respectively denoted by lower/upper case, boldface lower case and boldface upper case letters. For an arbitrary matrix $\mathbf{A}$, $\mathbf{A}^T$, $\mathbf{A}^*$, $\mathbf{A}^{H}$ and $\mathbf{A}^{\dagger} $ denote its transpose, conjugate, conjugate transpose and pseudo-inverse, respectively. $\|\cdot\|$ denote the Euclidean norm of a complex vector, and $|\cdot|$ denotes the absolute value of a complex scalar. $\mathbf{a} \circ \mathbf{b}$ denotes the element-wise product of two vectors. $\mathcal{CN}(\mathbf{x},\bm{\Sigma})$ denotes the distribution of a circularly symmetric complex Gaussian (CSCG) random vector with mean vector $\mathbf{x}$ and covariance matrix $\bm{\Sigma}$; and $\sim$ stands for ``distributed as''. For given numbers $x_1,\cdots,x_N$, $\textrm{diag}(x_1,\cdots,x_N)$ denotes a diagonal matrix with $\{x_1,\cdots,x_N\}$ being its diagonal elements and $\textrm{diag}(\mathbf{A})$ denotes a vector which contains the diagonal elements of matrix $\mathbf{A}$. The letter $\jmath$ is used to represent $\sqrt{-1}$. {\color{black}$\mathcal{F}^N$ is defined as the Cartesian product of $N$ identical sets each given by $\mathcal{F}$.} For a complex number $x$, $\Re \{x\}$ denotes its real part and $\angle x$ denotes its angle. $\mathbf{I}$ and $\mathbf{0}$ denote an identity matrix and an all-zero vector with appropriate dimensions, respectively. $\mathbb{E}\{\cdot\}$ denotes the statistical expectation. For any real number $x$, we define $[x]^+\triangleq \max (x,0)$. $\mathbb{C}^{n\times m}$ denotes the space of $n\times m$ complex matrices.

\section{System Model and Problem Formulation} \label{sec_system_model}
\subsection{System Model}
As shown in Fig. \ref{fig:system_model}, we consider a RIS-aided wireless system which consists of one BS equipped with $N_s$ antennas, one RIS composed of $N_r$ reflecting elements, one IU, and $M$ EUs denoted by $\mathcal{M} \triangleq \{ 1, \cdots,M\}$. Note that the EUs can potentially overhear  the confidential information sent from the BS to the IU. All the users are equipped with a single antenna and the EUs are assumed to be deployed in more proximity to  the BS for ease of energy harvesting due to severe channel attenuation. The RIS is attached to a smart controller that is able to communicate with the BS via a separate backhaul link for coordinating transmission and exchanging information, such as CSI and RIS phase shifts \cite{Wu2019Magazine}. In this work, we assume that the RIS is deployed in the vicinity of the EUs such that it cannot only help increase the energy harvested by the EUs, but also assist in enhancing the signal strength for the distant IU and improving its security through careful phase-shift optimization. Besides, since the signal transmitted through the BS-RIS-user link suffers from the double path loss, the signals reflected by RIS two or more times are ignored \cite{Wu2019Magazine}.

\begin{figure}[!t]
	\centering
		\setlength{\abovecaptionskip}{-0.2cm}
	\setlength{\belowcaptionskip}{-0.2cm}
	\scalebox{0.5}{\includegraphics{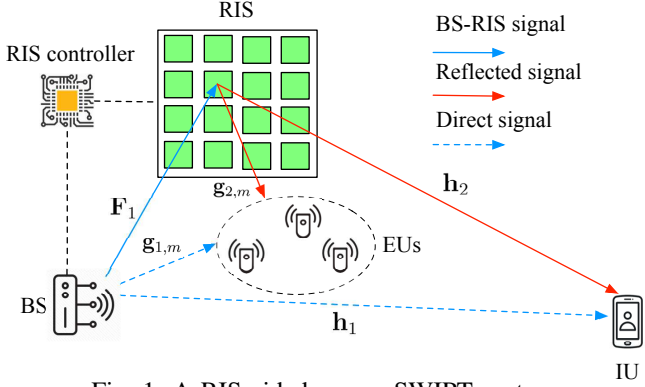}}
	\caption{A RIS-aided secure SWIPT system.}\label{fig:system_model}
\end{figure}

For simplicity, we consider linear transmit precoding at the BS and assume that each IU/EU is assigned with one individual information/energy beam without loss of generality.\footnote{Note that the energy beams carry no information but instead pseudo-random signals that are assumed to be unknown at both the BS and the IU, therefore they can also be viewed as artificial noise signals that can help improve the secrecy performance.} Hence, the transmitted signal at the BS can be expressed as
\begin{equation}
	\mathbf{x}=\mathbf{w}s^{\textrm{I}}+\sum_{m\in \mathcal{M}}\mathbf{p}_ms_m^{\textrm{E}},
\end{equation}
where $s^{\textrm{I}}$ and $s^{\textrm{E}}_m$ denote the information-bearing and energy-carrying signals, respectively, while $\mathbf{w}\in\mathbb{C}^{N_s\times 1}$ and $\mathbf{p}_m\in \mathbb{C}^{N_s\times 1}$ are the information and energy beamforming vectors for the IU and EU $m$, respectively. In particular, since the energy-carrying signals $s^{\textrm{E}}_m$'s do not carry any information, they are assumed to be independently generated from an arbitrary distribution with $\mathbb{E}\{|s^{\textrm{E}}_m|^2\}=1, \forall m \in \mathcal{M}$. In contrast, the information signal $s^{\textrm{I}}$ is assumed to be a CSCG random variables with zero mean and unit variance, i.e., $s^{\textrm{I}} \sim \mathcal{CN}(0,1)$.  Thus, the transmit power required at the BS is given by
\begin{equation}
	\mathbb{E}\{\mathbf{x}^H\mathbf{x}\}=\|\mathbf{w}\|^2+\|\mathbf{P}\|^2,
\end{equation}
where $\mathbf{P}\in \mathbb{C}^{N_s\times M}\triangleq [\mathbf{p}_1,\mathbf{p}_2,...,\mathbf{p}_M]$.

Let us define $\mathbf{h}_1^H\in\mathbb{C}^{1\times Ns}$, $\mathbf{F}_{1}^H\in \mathbb{C}^{Nr\times Ns}$ and $\mathbf{g}_{1,m}^H\in\mathbb{C}^{1\times Ns}$ as the channels from the BS to the IU, RIS and EU $m$, respectively, and let $\mathbf{h}_2^H\in \mathbb{C}^{1\times Nr}$ and $\mathbf{g}_{2,m}^H\in\mathbb{C}^{1\times N_r}$ denote the channels from the RIS to the IU and EU $m$, respectively. Denote by $\boldsymbol{\Theta}\triangleq \text{diag}(\phi_1 = \beta_1e^{j\theta_1},...,\phi_N = \beta_{N_r}e^{j\theta_{N_r}})=\text{diag}(\boldsymbol{\phi})$ ($\bm{\phi} \triangleq [\phi_1,\cdots,\phi_N]$) as the reflection-coefficient matrix at the RIS, where $\beta_n\in[0,1]$ and $\theta_n\in[0,2\pi)$ ($n \in \mathcal{N} \triangleq \{ 1,\cdots, N_r\}$) are the reflection amplitude and phase shift of the $n$-th element, respectively. In order to reduce the implementation cost of RIS, we set $\beta_n=1,\forall n$ to maximize the signal reflection \cite{Wu2018_journal}. Hence, the signal received at the IU can be expressed as
\begin{equation}
	y^{\textrm{I}}=\tilde{\mathbf{h}}^H\mathbf{x}+n_{I},
\end{equation}
where $n_{I} \sim \mathcal{CN}(0,\sigma_{I}^2)$ is the additive white Gaussian noise (AWGN) at the IU and $\tilde{\mathbf{h}}^H\triangleq (\mathbf{h}_2^H\boldsymbol{\Theta}^H\mathbf{F}_1^H + \mathbf{h}_1^H)=(\boldsymbol{\phi}^H\text{diag}(\mathbf{h}_2^H)\mathbf{F}_1^H + \mathbf{h}_1^H)$ is the effective channel from the BS to the IU. Similarly, the signal received at EU $m$ is given by
\begin{equation}
	y_m^{\textrm{E}}=\tilde{\mathbf{g}}^H_m\mathbf{x}+n_{E,m},
\end{equation}
where $n_{E,m} \sim \mathcal{CN}(0,\sigma_{E,m}^2)$ is the AWGN at EU $m$ and $\tilde{\mathbf{g}}^H_m\triangleq(\mathbf{g}_{2,m}^H\boldsymbol{\Theta}^H\mathbf{F}_1^H + \mathbf{g}_{1,m}^H)=(\boldsymbol{\phi}^H\text{diag}(\mathbf{g}_{2,m}^H)\mathbf{F}_1^H + \mathbf{g}_{1,m}^H)$ denotes the effective channel from the BS to EU $m$.
Accordingly, the SINRs at the IU and EUs are given by
\begin{equation} \label{SINR_IU}
	\textrm{SINR}^{\textrm{I}} (\mathbf{w}, \mathbf{P}, \bm{\Theta})=\frac{|\tilde{\mathbf{h}}^H\mathbf{w}|^2}{\|\tilde{\mathbf{h}}^H\mathbf{P}\|^2+\sigma_{I}^2},
\end{equation}
\begin{equation}
	\textrm{SINR}^{\textrm{E}}_m (\mathbf{w}, \mathbf{P}, \bm{\Theta})=\frac{|\tilde{\mathbf{g}}_m^H\mathbf{w}|^2}{\|\tilde{\mathbf{g}}_m^H\mathbf{P}\|^2+\sigma_{E,m}^2},\;\forall m \in \mathcal{M},
\end{equation}
respectively.\footnote{In the case that the energy beams are assumed to be known at both the BS and the IU before data transmission \cite{Xu2014}, then the interference caused by these energy beams can be cancelled at the IU and the SINR expression in \eqref{SINR_IU} can be simplified as $\frac{|\tilde{\mathbf{h}}^H\mathbf{w}|^2}{\sigma_{I}^2}$. Since the proposed algorithm can be directly applied to solve the simplified problem in this case, the details are omitted for brevity.}

On the other hand, by ignoring the noise power, the received RF power at EU $m$ can be written as
\begin{equation}
	Q_m (\mathbf{w}, \mathbf{P}, \bm{\Theta})=|\tilde{\mathbf{g}}_{m}^H\mathbf{w}|^2+\|\tilde{\mathbf{g}}_{m}^H\mathbf{P}\|^2, \;\forall m \in \mathcal{M}.
\end{equation}

\subsection{Problem Formulation}
In this paper, {\color{black}we aim for maximizing the average worst-case secrecy rate achieved by the IU by jointly optimizing the transmit beamforming vectors at the BS (including the information and energy beams) and the phase shifts at the RIS, subject to the total transmit power constraint at the BS and energy harvesting constraints at the EUs.} We adopt a similar TTS transmission protocol in \cite{zhao2019intelligent, zhao2020TTS} to reduce the signal processing complexity and training/signaling overhead of the conventional instantaneous CSI based schemes. {\color{black}Note that similar transmission protocols are also adopted by \cite{Zhi2021ECL, zhi2021twotimescale, zhi2021ergodic} for RIS-aided massive MIMO systems.} Specifically, as shown in Fig. \ref{fig:timeline}, the time axis is divided into several super-frames within which the statistics (distributions) of all channels are assumed to be constant. Each super-frame is composed of several ($T_f$) frames and each frame further consists of multiple ($T_s$) time slots, and all channels are assumed to remain approximately constant within each time slot and vary over different time slots, i.e., the quasi-static flat-fading model is assumed for all channels. We assume that the effective instantaneous CSI at each time slot, i.e., $\tilde{\mathbf{h}}$ and $\{\tilde{\mathbf{g}}_m\}$ with given $\bm{\phi}$ (or equivalently $\bm{\Theta}$) can be obtained at the beginning of each time slot and $T_c$ (typically less than the number of time slots contained in each frame) channel samples (possibly outdated) of the BS-RIS, BS-user and RIS-user links, i.e., $\mathcal{H}_j \triangleq \{\mathbf{h}_{1,j},\mathbf{g}_{1,m,j}, \text{diag}(\mathbf{h}_{2,j}^H)\mathbf{F}_{1,j}^H, \text{diag}(\mathbf{g}_{2,m,j}^H)\mathbf{F}_{1,j}^H \}$ ($j$ denotes the sample index), $j=1,\cdots,T_c$, are available at the BS at the end of each frame. {\color{black}The $T_c$ channel samples can be generated according to the channel's statistical information if we assume that there is a statistical CSI estimation phase before data transmission \cite{zhao2019intelligent}.}\footnote{\color{black}Since the statistical CSI has been estimated in this phase, an arbitrary number of channel samples can be generated for long-term RIS phase-shift optimization without imposing unaffordable channel estimation overhead. Note that these channel samples are random realizations of the statistical CSI, and by generating them, no new information is gleaned about the channel.} Therefore, we aim to optimize the long-term passive RIS phase-shift matrix $\bm{\Theta}$ based on the channel samples of all links (or implicitly, the channel statistics), while the short-term active beamforming vectors $\mathbf{w}$ and $\mathbf{P}$ at the BS are designed to adapt to the instantaneous CSI of the users' effective fading channels $\tilde{\mathbf{h}}$ and $\{\tilde{\mathbf{g}}_m\}$ with fixed RIS phase shifts over the time slots within each frame. Besides, since only some channel samples are needed in each frame and the estimation of the effective channels is much easier than obtaining the real-time full  instantaneous CSI in each time slot, the channel estimation overhead can be significantly reduced.

\begin{figure*}
\centering
\scalebox{1.25}{\includegraphics{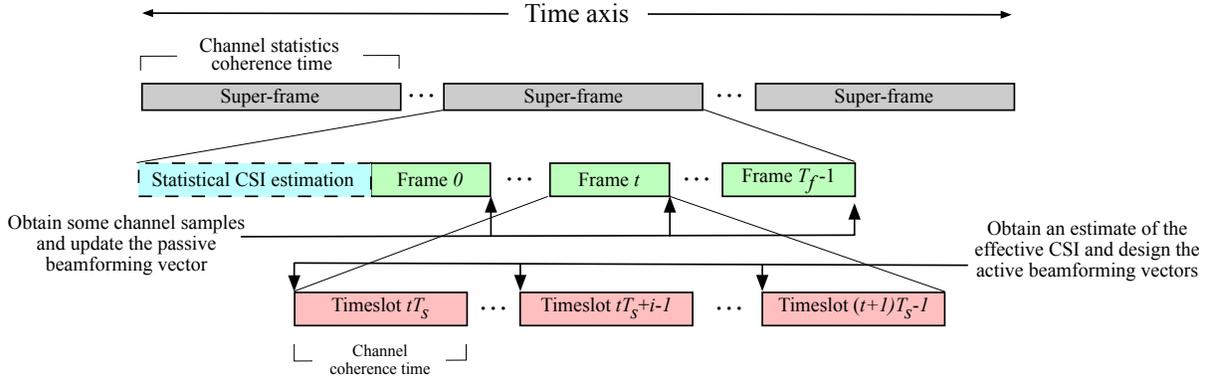}}
\caption{\color{black}Frame structure of the two-timescale scheme.}\label{fig:timeline}
\end{figure*}

%

Let $\boldsymbol{\theta}=[\theta_1, ..., \theta_{N_r}]^T=\angle\text{diag}(\boldsymbol{\Theta})$, then the worst-case secrecy rate achieved by the IU is given by\footnote{The notations $f(\cdot, \bm{\Theta})$ and $f(\cdot, \bm{\theta})$, where $f(\cdot)$ denotes an arbitrary function, will be used interchangeably in the following when there is no ambiguity.}
\begin{equation}
	\begin{aligned}
	S(\mathbf{w}, \mathbf{P}, \bm{\theta},\mathcal{H}) = & \Big[\log(1+\textrm{SINR}^{\textrm{I}}(\mathbf{w}, \mathbf{P}, \bm{\theta})) \\
&	-\max_{m\in\mathcal{M}} \log(1+\textrm{SINR}_m^{\textrm{E}}(\mathbf{w}, \mathbf{P}, \bm{\theta}))\Big]^+,
	\end{aligned}
\end{equation}
where $\mathcal{H} \triangleq \{\mathbf{h}_{1},\mathbf{g}_{1,m}, \text{diag}(\mathbf{h}_{2}^H)\mathbf{F}_{1}^H, \text{diag}(\mathbf{g}_{2,m}^H)\mathbf{F}_{1}^H \}$.
Accordingly, the optimization problem is mathematically formulated as follows:
\begin{subequations} \label{orignal_prob}
\begin{align}
	\max_{\boldsymbol{\theta}} \; &{\color{black} \mathbb{E} \bigg\{\max_{\mathbf{w},\;\mathbf{P} }  \; S(\mathbf{w}, \mathbf{P}, \bm{\theta},\mathcal{H}) \bigg\}} \label{objectve_function} \\
	\textrm{s.t.}\; & \|\mathbf{w}\|^2+\|\mathbf{P}\|^2\leq P_t,  \label{power_constraint}\\
	& Q_m (\mathbf{w}, \mathbf{P}, \bm{\theta}) \geq \epsilon_m,\;\forall m \in \mathcal{M}, \label{EH_constraint} \\
	&0\leq\theta_n < 2\pi, \forall n \in \mathcal{N},\label{phase_constraint}
\end{align}
\end{subequations}
where {\color{black}the expectation is taken over all channels' random realizations within the super-frame considered}, $P_t$ denotes the BS power budget and $ \epsilon_m$ represents the minimum RF receive power requirement of EU $m$. By solving problem \eqref{orignal_prob}, we aim to seek a reflection phase-shift vector $\bm{\theta}$ that is statistically optimal in the long term (by solving the outer problem in \eqref{orignal_prob}) and a short-term beamforming design that only requires the effective/partial instantaneous CSI $\tilde{\mathbf{h}}$ and $\{\tilde{\mathbf{g}}_m\}$ with given $\bm{\theta}$ (by solving the inner problem in \eqref{orignal_prob}). 

Let $\Omega \triangleq \{  \mathbf{w}(\mathcal{H}), \mathbf{P}(\mathcal{H}), \forall \mathcal{H} \}$ denote the collection of the short-term optimization variables for all possible channel realizations that satisfy the transmit power constraint \eqref{power_constraint}, and let $\mathbb{E}\{ S(\mathbf{w}(\mathcal{H}), \mathbf{P}(\mathcal{H}), \bm{\theta}, \mathcal{H}) \}$ denote the average worst-case secrecy rate. Then, we observe that problem \eqref{orignal_prob} is difficult to solve because 1) it is a non-convex stochastic optimization problem with the transmit beamforming vectors and RIS phase shifts intricately coupled in the objective function; 2) a closed-form
expression of $\mathbb{E}\{ S(\mathbf{w}(\mathcal{H}), \mathbf{P}(\mathcal{H}), \bm{\theta}, \mathcal{H}) \}$ with given either $\bm{\theta}$ or $\Omega$, is difficult to obtain in general; 3) as multiple EUs (potential eavesdroppers) are considered and these exist a $[\cdot]^+$ operation in \eqref{objectve_function}, the worst-case secrecy rate $S(\mathbf{w}, \mathbf{P}, \bm{\theta},\mathcal{H})$ is non-smooth.\footnote{\color{black}Although this contribution and our previous treatises~\cite{zhao2019intelligent, zhao2020TTS} all consider the two-timescale transmission scheme, the main difference lies in the non-smooth nature of the objective function. How to tackle this challenge is one of the main contributions of this work, which will become more explicit, as we progress.}  Generally, there is no standard method for solving such non-convex non-smooth stochastic optimization problems optimally. In the next two sections, we propose efficient algorithms to solve problem \eqref{orignal_prob} sub-optimally.
\newtheorem{remark}{Remark}
\begin{remark}
{\color{black}	\emph{As long as the statistical CSI is available, the long-term optimization can either be done at the end of the statistical CSI estimation phase in an offline fashion, or at the end of each frame in an online fashion. In the offline case, the computational burden and memory cost at the end of the statistical CSI estimation phase might be high since a number of iterations are required and the number of channel samples that should be stored is large. {\color{black}In this case, the long-term reflection coefficients are fixed in the rest of the frames, and in practice they can be initialized with those in the last super-frame for ensuring that the proposed   algorithm can converge promptly, since the channel statistics in two adjacent super-frames are quite similar.} While for the online case, only a small number of channel samples are utilized to update the long-term variables, and thus only a small number of short-term subproblems are required to be solved at each iteration, which also means that the reflection coefficients are gradually refined in each frame. {\color{black}This is similar to the updating process of neural network parameters in deep reinforcement learning. By doing so, the algorithm can automatically track the slowly-varying channel statistics.} Therefore, online algorithms are more desirable due to its lower memory requirement and complexity.}}
\end{remark} 
\begin{remark}
\emph{The efficiency of the proposed scheme depends on how fast the channel statistics vary. If the statistical CSI changes slowly, e.g., when the users have low mobility, then the implementation cost can be significantly reduced, since only the effective channels have to be estimated in each time slot and RIS optimization only occures sporadically. By contrast, if the channel statistics rapidly become outdated, then the overhead required by statistical CSI estimation will increase, and so will the frequency of RIS optimization, which affects the the proposed scheme's efficiency. {\color{black}In this paper, we only focus our attention on a quasi-static statistical CSI scenario, while the effects of time-variant statistical CSI are studied via numerical simulations (in Section \ref{section_time_variant_S_CSI}). Further investigation of rapidly time-variant statistical CSI scenarios is set aside for our future research.}}
\end{remark}

\section{Proposed SA-SSCA Algorithm} \label{sec_proposed_algorithm}
In this section, we propose the SA-SSCA algorithm to efficiently solve problem \eqref{orignal_prob}. Specifically, we first transform problem \eqref{orignal_prob} into a more tractable form by employing smooth approximation, which turns \eqref{objectve_function} into a differentiable function. Then, the SSCA algorithm is proposed to solve the transformed problem by decomposing it into a long-term subproblem and a family of short-term subproblems. We show that these two types of subproblems can be efficiently solved, and the proposed SA-SSCA algorithm is guaranteed to converge. {\color{black}Note that although the SSCA framework of~\cite{Liu2018TOSCA} is adopted, it is nontrivial to design efficient long-term and short-term optimization algorithms, especially when the objective function and constraints are in complex forms as in our case.}

\subsection{Problem Transformation and Decomposition}
First, we propose to ignore the operator $[\cdot]^+$ in the objective function \eqref{objectve_function} since if the worst-case secrecy rate $S(\mathbf{w}, \mathbf{P}, \bm{\theta},\mathcal{H})$ is negative at an arbitrary time slot, we can always set the corresponding information-bearing signal $s^{\textrm{I}}$ equal to $ s^{\textrm{E}}_m$ ($m$ can be chosen arbitrarily) such that no information is sent and $S(\mathbf{w}, \mathbf{P}, \bm{\theta},\mathcal{H})=0$ is satisfied without violating the constraints \eqref{power_constraint} and \eqref{EH_constraint}. As a result, problem \eqref{orignal_prob} can be equivalently transformed into
\begin{equation}
\begin{aligned}
	\max_{\boldsymbol{\theta}} \; &\mathbb{E}\bigg\{\max_{\mathbf{w},\; \mathbf{P} } \;  \tilde{S} (\mathbf{w}, \mathbf{P}, \bm{\theta},\mathcal{H}) \bigg\}\\
	\textrm{s.t.}\; & \eqref{power_constraint}-\eqref{phase_constraint},
\end{aligned}\label{trans_prob}
\end{equation}
where $\tilde{S}(\mathbf{w}, \mathbf{P}, \bm{\theta},\mathcal{H}) \triangleq \log(1+\textrm{SINR}^{\textrm{I}}(\mathbf{w}, \mathbf{P}, \bm{\theta}))-\max_{m\in\mathcal{M}} \log(1+\textrm{SINR}_m^{\textrm{E}}(\mathbf{w}, \mathbf{P}, \bm{\theta}))$. Note that problem \eqref{orignal_prob} and problem \eqref{trans_prob} have the same optimal solution since $\tilde{S}(\mathbf{w}, \mathbf{P}, \bm{\theta},\mathcal{H})$ is always non-negative as discussed above.

Next, we can see that problem \eqref{trans_prob} is still difficult to solve due to its non-smooth objective function caused by the existence of multiple EUs.
To tackle this difficulty, we introduce the following smooth approximation lemma to construct a differentiable and continuous approximation of $\tilde{S}(\mathbf{w}, \mathbf{P}, \bm{\theta},\mathcal{H}) $.

\begin{lemma}\cite{Boyd2004}\label{lemma1}
Let $f(\mathbf{x})\triangleq \max(\mathbf{x})$ and $\hat{f}_p(\mathbf{x})\triangleq \frac{1}{p}\text{ln}\left( \sum_{i=1}^m e^{px_i}\right) $, where $\mathbf{x}=  [x_1, x_2, ... ,x_m]^T\in\mathbb{R}^{m \times 1}$, $m$ is a positive integer and $p$ is a real number. Then, we have
\begin{equation}
	f(\mathbf{x}) \leq \hat{f}_p(\mathbf{x}),
\end{equation}
and
\begin{equation}
	\lim_{p\to \infty} \hat{f}_p(\mathbf{x})=f(\mathbf{x}),
\end{equation}
when $m\geq 2$ and $p>0$. {\color{black}The approximation gap is upper-bounded by $\frac{1}{p}\log m$.}
\end{lemma}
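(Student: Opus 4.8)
The plan is to establish the two-sided sandwich
$f(\mathbf{x}) \le \hat{f}_p(\mathbf{x}) \le f(\mathbf{x}) + \tfrac{1}{p}\log m$
by elementary manipulations of the exponential sum, and then read off both the claimed inequality and the limit as immediate consequences. Throughout, I would fix an index $k$ attaining the maximum, i.e.\ $x_k = \max_{1\le i\le m} x_i = f(\mathbf{x})$, and recall that, since $p>0$, the maps $t\mapsto e^{pt}$ and $t\mapsto \tfrac{1}{p}\ln t$ (for $t>0$) are both strictly increasing, so inequalities are preserved under each. Isolating $k$ at the outset is what keeps the bookkeeping of inequality directions clean.

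For the lower bound, I would observe that every term $e^{px_i}$ is positive, hence $\sum_{i=1}^m e^{px_i} \ge e^{px_k}$; applying $\tfrac{1}{p}\ln(\cdot)$ to both sides yields $\hat{f}_p(\mathbf{x}) \ge \tfrac{1}{p}\ln\!\big(e^{px_k}\big) = x_k = f(\mathbf{x})$, which is exactly the asserted inequality $f(\mathbf{x}) \le \hat{f}_p(\mathbf{x})$. For the matching upper bound I would use that each of the $m$ summands satisfies $e^{px_i} \le e^{px_k}$ (because $x_i \le x_k$ and $p>0$), so $\sum_{i=1}^m e^{px_i} \le m\,e^{px_k}$; applying $\tfrac{1}{p}\ln(\cdot)$ gives $\hat{f}_p(\mathbf{x}) \le \tfrac{1}{p}\ln\!\big(m\,e^{px_k}\big) = \tfrac{1}{p}\ln m + x_k = f(\mathbf{x}) + \tfrac{1}{p}\log m$. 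Combining the two estimates yields $0 \le \hat{f}_p(\mathbf{x}) - f(\mathbf{x}) \le \tfrac{1}{p}\log m$, which is precisely the stated approximation-gap bound; note this argument is valid for every $m\ge 1$, and the hypothesis $m\ge 2$ only serves to make the bound a genuine (strictly positive) gap estimate.

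For the limit statement I would simply let $p\to\infty$ in the sandwich: since $\tfrac{1}{p}\log m \to 0$, the squeeze theorem gives $\hat{f}_p(\mathbf{x}) \to f(\mathbf{x})$ for every fixed $\mathbf{x}$. I do not expect a genuine obstacle here — the result is the classical log-sum-exp smoothing of the maximum (as in \cite{Boyd2004}) and the whole argument is only a few lines — so the "main difficulty'' is really just the minor care needed in tracking the direction of the inequalities through the two monotone transformations $e^{p(\cdot)}$ and $\tfrac{1}{p}\ln(\cdot)$, which is exactly why I pin down the maximizing index $k$ before anything else.
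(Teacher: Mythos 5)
Your proof is correct: the sandwich $e^{px_k}\le\sum_{i=1}^m e^{px_i}\le m\,e^{px_k}$ followed by the monotone map $\tfrac{1}{p}\ln(\cdot)$ gives both the inequality and the gap bound $\tfrac{1}{p}\log m$, and the limit follows by squeezing. The paper offers no proof of its own (it cites the lemma from the convex-optimization reference), and your argument is exactly the standard one given there, so there is nothing to reconcile.
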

\noindent
Lemma \ref{lemma1} indicates that the max function can be approximated by a log-sum-exp function, which is convex and differentiable (in fact, analytical). By resorting to Lemma \ref{lemma1} and the fact that $1+\textrm{SINR}_{m}^{\textrm{E}}(\mathbf{w}, \mathbf{P}, \bm{\theta})>0, \forall m \in \mathcal{M}$, $\tilde{S}(\mathbf{w}, \mathbf{P}, \bm{\theta},\mathcal{H})$ can be lower-bounded by
\begin{equation}
	\begin{aligned}
\tilde{S}(\mathbf{w}, \mathbf{P}, & \bm{\theta},\mathcal{H}) >  \bar{S}(\mathbf{w}, \mathbf{P}, \bm{\theta},\mathcal{H}) \triangleq  \log(1+\textrm{SINR}^{\textrm{I}} (\mathbf{w}, \mathbf{P}, \bm{\theta})) \\
&-\frac{1}{p}\log\left(\sum_{m\in \mathcal{M}}(1+\textrm{SINR}_m^{\textrm{E}}(\mathbf{w}, \mathbf{P}, \bm{\theta}))^{p}\right).
\end{aligned}
\end{equation}
Hence, problem \eqref{trans_prob} can be further transformed into
\begin{equation}
\begin{aligned}
	\max_{\boldsymbol{\theta}} \; & \mathbb{E}\bigg\{ \max_{\mathbf{w},\; \{\mathbf{p}_m\}}\; \bar{S}(\mathbf{w}, \mathbf{P}, \bm{\theta},\mathcal{H}) \bigg\}\\
	\textrm{s.t.}\;& \eqref{power_constraint}-\eqref{phase_constraint}.
\end{aligned}\label{trans_prob1}
\end{equation}
Note that for any finite value of $p$, problem \eqref{trans_prob1} is an approximation of problem \eqref{trans_prob}, while they become equivalent as $p \rightarrow \infty$.

Then, in order to resolve the difficulty caused by the fact that the long-term variable $\bm{\theta}$ and short-term variables $\{\mathbf{w}, \mathbf{P} \}$ are coupled, we further employ the primal decomposition method \cite{Boyd2004, Liu2018TOSCA} to decompose problem \eqref{trans_prob1} into a long-term subproblem and a family of short-term subproblems. Specifically, for fixed long-term phase shifts $\bm{\theta}$, we have the following short-term subproblem:
\begin{equation} \label{shortterm_problem}
	\begin{aligned}
\mathcal{P}_S(\bm{\theta}, \mathcal{H}): \max_{\mathbf{w},\; \mathbf{P}} \; & \bar{S}(\mathbf{w}, \mathbf{P}, \bm{\theta},\mathcal{H}) \\
		\textrm{s.t.}\; & \eqref{power_constraint}\; \textrm{and}\; \eqref{EH_constraint},
	\end{aligned}
\end{equation}
which corresponds to one channel realization $\mathcal{H}$ and is solved in each time slot. Note that since $\bm{\theta}$ is fixed in $\mathcal{P}_S(\bm{\theta}, \mathcal{H})$, the BS only needs to estimate the low-dimensional effective fading channels $\tilde{\mathbf{h}}$ and $\tilde{\mathbf{g}}_m,\forall m \in \mathcal{M}$, in each time slot, instead of the full instantaneous CSI. Let $(\mathbf{w}^J(\bm{\theta},\mathcal{H}), \mathbf{P}^J(\bm{\theta},\mathcal{H}) )$ denote a stationary solution of $\mathcal{P}_S(\bm{\theta}, \mathcal{H})$, which is obtained by running the short-term sub-algorithm (will be introduced in Section \ref{section_shortterm_algorithm}) for a sufficiently large number ($J$) of iterations.\footnote{Note that typically, the convergence to a stationary solution requires $J \rightarrow \infty$, however, the short-term sub-algorithm always runs for a finite number of iterations in practice. Therefore, $(\mathbf{w}^J(\bm{\theta},\mathcal{H}), \mathbf{P}^J(\bm{\theta},\mathcal{H}) )$ may not be the exact stationary solution of $\mathcal{P}_S(\bm{\theta}, \mathcal{H})$, but this will not affect the convergence of the proposed algorithm as the error caused by this issue will diminish to zero as $J\rightarrow \infty$ \cite{Liu2018TOSCA}.} Then, with the short-term policy $\{\mathbf{w}^J(\bm{\theta},\mathcal{H}), \mathbf{P}^J(\bm{\theta},\mathcal{H}) \}$, we formulate the following long-term subproblem:
\begin{equation}\label{longterm_problem}
	\begin{aligned}
\mathcal{P}_L^J: \max_{\boldsymbol{\theta}}\; & \mathbb{E}\{\bar{S} (\mathbf{w}^J(\bm{\theta},\mathcal{H}), \mathbf{P}^J(\bm{\theta},\mathcal{H}), \bm{\theta}) \}, \\
	\textrm{s.t.}\; &0\leq\theta_n < 2\pi, \forall n \in \mathcal{N}.
	\end{aligned}
\end{equation}
According to \cite{Liu2018TOSCA}, a stationary solution (up to  certain errors caused by the finite $J$) can be found by solving the long-term subproblem $\mathcal{P}_L^J$ to obtain a stationary point $\bm{\theta}^\star$, and finding a stationary solution $(\mathbf{w}^J(\bm{\theta},\mathcal{H}), \mathbf{P}^J(\bm{\theta},\mathcal{H}) )$ of $\mathcal{P}_S(\bm{\theta}, \mathcal{H})$ for each $\mathcal{H}$ (i.e., finding a short-term beamforming design policy). In the following subsections, we will present how to solve $\mathcal{P}_S(\bm{\theta}, \mathcal{H})$ and $\mathcal{P}_L^J$ efficiently.

\subsection{Solving the Short-term Subproblem} \label{section_shortterm_algorithm}
With given RIS phase-shift vector $\bm{\theta}$, we observe that $\mathcal{P}_S(\bm{\theta}, \mathcal{H})$ is still difficult to solve as its objective function $\bar{S}(\mathbf{w}, \mathbf{P}, \bm{\theta},\mathcal{H})$ and the energy harvesting constraint \eqref{EH_constraint} are non-convex. To transform $\mathcal{P}_S(\bm{\theta}, \mathcal{H})$ into a more tractable form, we resort to the following theorem.
\begin{theorem}\label{trans_theorem1}
	Let $z\in\mathbb{C}$, $u \in\mathbb{C}$ and $v \in\mathbb{C}$ denote the introduced auxiliary variables, then the following problem:
	\begin{equation} \label{trans_prob2}
		\begin{aligned}
			\min_{\mathbf{w},\mathbf{P}, z, u, v} \; & ze(\mathbf{w},\mathbf{P},u) \!-\! \log(z) \!+\! \frac{1}{p}v\tilde{e}(\mathbf{w},\mathbf{P}) \!-\! \frac{1}{p}\log(v),\\
			\textrm{s.t.}\; & \eqref{power_constraint}\; \textrm{and}\; \eqref{EH_constraint},
		\end{aligned}
	\end{equation}
where $e(\mathbf{w},\mathbf{P},u)\triangleq |u|^2| \tilde{\mathbf{h}}^H\mathbf{w}|^2+|u|^2\sigma^2+|u|^2\|\tilde{\mathbf{h}}^H\mathbf{P}\|^2+1-2\Re\{u^* \tilde{\mathbf{h}}^H\mathbf{w}\}$ and $\tilde{e}(\mathbf{w},\mathbf{P})\triangleq \sum_{m \in \mathcal{M}}(1+\textrm{SINR}_m^{\textrm{E}}(\mathbf{w}, \mathbf{P}, \bm{\theta}))^{p}$, is equivalent to $\mathcal{P}_S(\bm{\theta}, \mathcal{H})$, in the sense that the global optimal solution $(\mathbf{w}^\star, \mathbf{P}^\star)$ for the two problems are identical.
\end{theorem}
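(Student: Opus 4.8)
The plan is to treat \eqref{trans_prob2} as a weighted mean-square-error (WMMSE)-type reformulation of $\mathcal{P}_S(\bm{\theta},\mathcal{H})$ and to establish the claimed equivalence by minimizing the objective of \eqref{trans_prob2} over the auxiliary variables $z$, $u$ and $v$ in closed form, one at a time. The key structural fact that makes this work is that the feasible set of \eqref{trans_prob2} constrains only $(\mathbf{w},\mathbf{P})$, through \eqref{power_constraint} and \eqref{EH_constraint}; the triple $(z,u,v)$ is free. Hence for every feasible $(\mathbf{w},\mathbf{P})$ the partial minimization over $(z,u,v)$ is unconstrained, and the joint minimum coincides with the iterated one. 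One should first note that, although $z$ and $v$ are nominally complex, only positive reals are meaningful in $-\log z$ and $-\log v$, and the optimal ones turn out to be positive reals, so we may restrict $z,v>0$ without loss.

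First I would eliminate $u$. Identify $e(\mathbf{w},\mathbf{P},u)$ with the mean-square error $\mathbb{E}\{|u^*y^{\textrm{I}}-s^{\textrm{I}}|^2\}$ incurred when the IU applies the scalar equalizer $u$ to its received signal $y^{\textrm{I}}$; this is a strictly convex quadratic in $u$, so the Wirtinger stationarity condition yields the MMSE equalizer $u^\star=\tilde{\mathbf{h}}^H\mathbf{w}/(|\tilde{\mathbf{h}}^H\mathbf{w}|^2+\|\tilde{\mathbf{h}}^H\mathbf{P}\|^2+\sigma_I^2)$, and substituting it back gives the standard MSE--SINR identity $\min_u e(\mathbf{w},\mathbf{P},u)=(1+\textrm{SINR}^{\textrm{I}}(\mathbf{w},\mathbf{P},\bm{\theta}))^{-1}$. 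Since this minimum equals $(\|\tilde{\mathbf{h}}^H\mathbf{P}\|^2+\sigma_I^2)/(|\tilde{\mathbf{h}}^H\mathbf{w}|^2+\|\tilde{\mathbf{h}}^H\mathbf{P}\|^2+\sigma_I^2)>0$, every logarithm below is well defined; note also $\tilde{e}(\mathbf{w},\mathbf{P})>0$ as a sum of positive terms.

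Next I would eliminate $z$ and $v$. For fixed $(\mathbf{w},\mathbf{P},u)$ the map $z\mapsto z\,e(\mathbf{w},\mathbf{P},u)-\log z$ is strictly convex on $z>0$ and minimized at $z^\star=1/e(\mathbf{w},\mathbf{P},u)$ with value $1+\log e(\mathbf{w},\mathbf{P},u)$; likewise $v\mapsto \tfrac1p v\,\tilde{e}(\mathbf{w},\mathbf{P})-\tfrac1p\log v$ is minimized at $v^\star=1/\tilde{e}(\mathbf{w},\mathbf{P})$ with value $\tfrac1p\bigl(1+\log\tilde{e}(\mathbf{w},\mathbf{P})\bigr)$. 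Chaining the three minimizations and inserting $\min_u e=(1+\textrm{SINR}^{\textrm{I}})^{-1}$, the objective of \eqref{trans_prob2} minimized over $(z,u,v)$ becomes
\[
1-\log\!\bigl(1+\textrm{SINR}^{\textrm{I}}(\mathbf{w},\mathbf{P},\bm{\theta})\bigr)+\tfrac1p+\tfrac1p\log\tilde{e}(\mathbf{w},\mathbf{P})
=\Bigl(1+\tfrac1p\Bigr)-\bar{S}(\mathbf{w},\mathbf{P},\bm{\theta},\mathcal{H}).
\]

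Finally, since $1+\tfrac1p$ is a constant and the constraint set in $(\mathbf{w},\mathbf{P})$ is the same in both problems, minimizing \eqref{trans_prob2} over $(\mathbf{w},\mathbf{P},z,u,v)$ is equivalent to maximizing $\bar{S}(\mathbf{w},\mathbf{P},\bm{\theta},\mathcal{H})$ subject to \eqref{power_constraint} and \eqref{EH_constraint}, i.e., to $\mathcal{P}_S(\bm{\theta},\mathcal{H})$; consequently a pair $(\mathbf{w}^\star,\mathbf{P}^\star)$ is part of a global optimum of one problem if and only if it is a global optimum of the other, the matching optimal auxiliary variables being the closed-form $u^\star,z^\star,v^\star$ above. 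I expect the only non-routine ingredient to be the MSE--SINR identity, but for a scalar equalizer it is merely the minimization of a quadratic; the remaining steps are elementary one-dimensional convex minimizations. The point I would be most careful to articulate is the bookkeeping one — that $(z,u,v)$ decouple from the constraints — since that is exactly what licenses the partial-minimization argument.
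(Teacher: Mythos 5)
Your proposal is correct and follows essentially the same route as the paper's own proof: both eliminate the unconstrained auxiliary variables $(z,u,v)$ via their first-order optimality conditions, obtaining $u^\star=\tilde{\mathbf{h}}^H\mathbf{w}/(|\tilde{\mathbf{h}}^H\mathbf{w}|^2+\|\tilde{\mathbf{h}}^H\mathbf{P}\|^2+\sigma^2)$, $z^\star=1/e$, $v^\star=1/\tilde{e}$, and substitute back to recover $\mathcal{P}_S(\bm{\theta},\mathcal{H})$ up to an additive constant. You actually carry out the substitution explicitly (the MSE--SINR identity and the constant $1+\tfrac1p$) and justify the partial-minimization/decoupling step, which the paper leaves implicit.
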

\begin{proof}
Please refer to Appendix \ref{proof_theorem1}.
\end{proof}
\noindent
Theorem \ref{trans_theorem1} implies that maximizing $\bar{S}(\mathbf{w}, \mathbf{P}, \bm{\theta},\mathcal{H})$ can be accomplished via minimizing the objective function of problem \eqref{trans_prob2}, which is a weighted sum of $e(\mathbf{w},\mathbf{P},u)$ and $\tilde{e}(\mathbf{w},\mathbf{P})$.\footnote{Note that the intrinsic idea behind Theorem \ref{trans_theorem1} is similar to the weighted sum mean-square error minimization (WMMSE) transformation proposed in \cite{Shi2011WMMSE}, however, the difference is that these exists a polynomial term of $\textrm{SINR}_m^{\textrm{E}}(\mathbf{w}, \mathbf{P}, \bm{\theta})$ in $\bar{S}(\mathbf{w}, \mathbf{P}, \bm{\theta},\mathcal{H})$ and thus this term cannot be transformed into weighted mean squared error (MSE).} It is noteworthy that problem \eqref{trans_prob2} is in the space of $(\mathbf{w}, \mathbf{P},z,u,v)$ and is easier to handle since optimizing each variable while fixing the others is relatively easy. To proceed, we further introduce a set of auxiliary variables $y_m, m\in \mathcal{M}$, then problem \eqref{trans_prob2} can be equivalently transformed into the following problem:
\begin{subequations} \label{trans_prob3}
	\begin{align}
		\min_{\mathbf{w},\mathbf{P}, z, u, v,\{y_m\}} \; & ze(\mathbf{w},\mathbf{P},u)-\log(z)\\
		&+\frac{1}{p}v\left(\sum_{m\in \mathcal{M}}(1+y_m)^{p}\right)-\frac{1}{p}\log(v) \notag \\
		\text{s.t.}\; & \textrm{SINR}_m^{\textrm{E}} (\mathbf{w}, \mathbf{P}, \bm{\theta}) \leq y_m, \;\forall m \in \mathcal{M}, \label{SINR_const}\\
		&\eqref{power_constraint}\;\textrm{and}\; \eqref{EH_constraint}. \notag
	\end{align}
\end{subequations}
The equivalence between problem \eqref{trans_prob2} and problem \eqref{trans_prob3} can be easily verified as the optimal $y_m^{\star}$ of problem \eqref{trans_prob3} must satisfy $y_m^{\star} = \textrm{SINR}_m^{\textrm{E}} (\mathbf{w}, \mathbf{P}, \bm{\theta})$ (otherwise, we can always slightly decrease $y_m^{\star}$ such that the objection value is reduced without violating any constraints) and by substituting $y_m^{\star}$ back into the objective function of problem \eqref{trans_prob3}, we obtain problem \eqref{trans_prob2}. In the following, we propose an efficient iterative algorithm to solve problem \eqref{trans_prob3} by combining the CCCP and BCD methods, where first-order approximations are employed to convexify the non-convex parts therein and the optimization variables are properly partitioned into different blocks to be successively updated. Note that in contrast, the original $\mathcal{P}_S(\bm{\theta}, \mathcal{H})$ is non-convex in the design variables $\mathbf{w}$ and $\mathbf{P}$ (even with fixed $\mathbf{w}$ or $\mathbf{P}$), which makes it difficult to solve.

First, we note that the SINR constraint \eqref{SINR_const} and the energy harvesting constraint \eqref{EH_constraint} are non-convex and difficult to handle. Therefore, by employing the CCCP method, we introduce the following upper and lower bounds for $\textrm{SINR}_m^{\textrm{E}}(\mathbf{w}, \mathbf{P}, \bm{\theta})$ and $Q_m (\mathbf{w}, \mathbf{P}, \bm{\theta})$, respectively:
\begin{equation}\label{ineq_snr}
	\begin{aligned}
	\textrm{SINR}_m^{\textrm{E}} & (\mathbf{w}, \mathbf{P}, \bm{\theta})  \\ 
	& \leq \frac{|\tilde{\mathbf{g}}_m^H\mathbf{w}|^2}{-\|\tilde{\mathbf{g}}_{m}^H\mathbf{P}^{f}\|^2 + 2\Re\{\tilde{\mathbf{g}}_{m}^H\mathbf{P}^{f}\mathbf{P}^H\tilde{\mathbf{g}}_{m} \}+\sigma_{E,m}^2},
	\end{aligned}
\end{equation}
\begin{equation}\label{ineq_eh}
	\begin{aligned}
	Q_m (\mathbf{w}, \mathbf{P}, \bm{\theta})  \geq & -|\tilde{\mathbf{g}}_{m}^H\mathbf{w}^{f}|^2-\|\tilde{\mathbf{g}}_{m}^H\mathbf{P}^{f}\|^2 \\
	& + 2\Re\{\tilde{\mathbf{g}}_{m}^H\mathbf{w}^{f}\mathbf{w}^H\tilde{\mathbf{g}}_{m} \!+\! \tilde{\mathbf{g}}_{m}^H\mathbf{P}^{f}\mathbf{P}^H\tilde{\mathbf{g}}_{m} \},
	\end{aligned}
\end{equation}
where  $\|\tilde{\mathbf{g}}_{m}^H\mathbf{P}\|^2$ and $|\tilde{\mathbf{g}}_{m}^H\mathbf{w}|^2$ are approximated by their first-order Taylor expansions, and $\mathbf{w}^{f}$ and $\mathbf{P}^{f}$ denote the information beamforming vector and energy beamforming matrix obtained in the previous iteration, which are fixed in the current iteration. Accordingly, we obtain an approximate problem of \eqref{trans_prob3}, shown in \eqref{trans_prob4} at the top of this page.
\begin{figure*}
\begin{subequations} \label{trans_prob4}
\begin{align}
	\min_{\mathbf{w},\mathbf{P}, z,  u, v,\{y_m\}} & \;  ze(\mathbf{w},\mathbf{P},u)-\log(z)+\frac{1}{p}v\left(\sum_{m \in \mathcal{M}}(1+y_m)^{p}\right)-\frac{1}{p}\log(v) \label{obj}\\
	\text{s.t.}\; &  \frac{|\tilde{\mathbf{g}}_m^H\mathbf{w}|^2}{-\|\tilde{\mathbf{g}}_{m}^H\mathbf{P}^{f}\|^2 + 2\Re\{\tilde{\mathbf{g}}_{m}^H\mathbf{P}^{f}\mathbf{P}^H\tilde{\mathbf{g}}_{m} \}+\sigma_{E,m}^2}\leq y_m,\;\forall m \in \mathcal{M}, \label{SINR_constraint1} \\
	&- \! |\tilde{\mathbf{g}}_{m}^H\mathbf{w}^{f}|^2 \! -\! \|\tilde{\mathbf{g}}_{m}^H\mathbf{P}^{f}\|^2 \! + \! 2\Re\{\tilde{\mathbf{g}}_{m}^H\mathbf{w}^{f}\mathbf{w}^H\tilde{\mathbf{g}}_{m} \! + \! \tilde{\mathbf{g}}_{m}^H\mathbf{P}^{f}\mathbf{P}^H\tilde{\mathbf{g}}_{m} \}\geq \epsilon_m,\;\forall m \in \mathcal{M},\label{EH_constraint1}\\
	&\eqref{power_constraint}. \notag
\end{align}
\hrulefill
\end{subequations}
\end{figure*}

To this end, we propose to solve problem \eqref{trans_prob4} in a BCD-fashion, i.e., the variables are divided into multiple blocks and are updated sequentially. Specifically, the variables are divided into three blocks, i.e., $\{z,v\}$, $u$, and $\{\mathbf{w}, \mathbf{P}, y_m\}$, and the detailed updating steps are given as follows.

\textbf{Step 1}: With fixed $u$ and $\{\mathbf{w}, \mathbf{P}, y_m\}$, it can be seen that optimizing $z$ and $v$ is equivalent to solving the following unconstrained convex optimization problem:
\begin{equation} \label{step1_problem}
	\begin{aligned}
		\min_{z,\; v} \; & ze(\mathbf{w},\mathbf{P},u)-\log(z)\\
		& +\frac{1}{p}v\left(\sum_{m \in \mathcal{M}}(1+y_m)^{p}\right)-\frac{1}{p}\log(v),
	\end{aligned}
\end{equation}
whose optimal solution can be easily obtained as $z = \frac{1}{e(\mathbf{w},\mathbf{P},u)}$ and $v = \frac{1}{\sum_{m \in \mathcal{M}}(1+y_m)^p}$ by resorting to the first-order optimality condition.

\textbf{Step 2}: With fixed $\{\mathbf{w}, \mathbf{P}, y_m\}$ and  $\{z, v\}$, we obtain the following convex problem with respect to (w.r.t.) $u$:
\begin{equation} \label{step2_problem}
	\begin{aligned}
\min\limits_{u}\;& |u|^2| \tilde{\mathbf{h}}^H\mathbf{w}|^2+|u|^2\sigma^2\\
& +|u|^2\|\tilde{\mathbf{h}}^H\mathbf{P}\|^2+1-2\Re\{u^* \tilde{\mathbf{h}}^H\mathbf{w}\},
\end{aligned}
\end{equation}
and the optimal solution is given by $u = \frac{\tilde{\mathbf{h}}^H\mathbf{w}}{|\tilde{\mathbf{h}}^H\mathbf{w}|^2+\|\tilde{\mathbf{h}}^H\mathbf{P}\|^2+\sigma^2}$.

\textbf{Step 3}: For the optimization of  $\{\mathbf{w}, \mathbf{P}, y_m\}$, we have the following problem:
\begin{equation} \label{step3_problem}
\begin{aligned}
	\min_{\mathbf{w},\; \mathbf{P},\; \{y_m\}} \; & ze(\mathbf{w},\mathbf{P},u)-\log(z)\\
	& +\frac{1}{p}v\left(\sum_{m \in \mathcal{M}}(1+y_m)^p\right)-\frac{1}{p}\log(v)  \notag \\
	\text{s.t.}\; & \eqref{power_constraint}, \eqref{SINR_constraint1}\;\textrm{and}\; \eqref{EH_constraint1},
\end{aligned}
\end{equation}
which is convex and can be efficiently solved via off-the-shelf solvers, such as CVX \cite{CVX}.

To summarize, the short-term subproblem $\mathcal{P}_S(\bm{\theta}, \mathcal{H})$ can be efficiently solved by iterating over the
abovementioned three steps and the details are shown in Algorithm \ref{inner_algorithm}. With regard to the convergence property of Algorithm \ref{inner_algorithm}, we have the following theorem.
\begin{theorem} \label{theorem_convergence}
	The objective value \eqref{obj} is non-increasing over the iterations of Algorithm \ref{inner_algorithm} and any limit point generated by Algorithm \ref{inner_algorithm} is a stationary solution of $\mathcal{P}_S(\bm{\theta}, \mathcal{H})$.
\end{theorem}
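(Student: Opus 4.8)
The plan is to prove the two assertions in turn. Throughout I would work with the reformulated problem: by Theorem~\ref{trans_theorem1} (and the elementary equivalence \eqref{trans_prob2}$\Leftrightarrow$\eqref{trans_prob3}), solving $\mathcal{P}_S(\bm{\theta},\mathcal{H})$ is the same as solving \eqref{trans_prob3}, and Algorithm~\ref{inner_algorithm} is exactly a block-coordinate descent on the convex inner approximation \eqref{trans_prob4} of \eqref{trans_prob3}, with the linearization points in \eqref{ineq_snr}--\eqref{ineq_eh} refreshed once per sweep. So the argument splits into (i) the objective \eqref{obj} never increases along a sweep, and (ii) every limit point is a stationary point of \eqref{trans_prob3}, which is then transferred to $\mathcal{P}_S(\bm{\theta},\mathcal{H})$.

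\textbf{Monotonicity.} Steps~1 and~2 minimize \eqref{obj} \emph{exactly} over the single blocks $\{z,v\}$ and $u$ with the rest fixed, over unconstrained convex domains; since the closed-form global minimizers in \eqref{step1_problem}--\eqref{step2_problem} are used, \eqref{obj} does not increase. Step~3 minimizes \eqref{obj} exactly over $\{\mathbf{w},\mathbf{P},\{y_m\}\}$ subject to the convex surrogate constraints \eqref{SINR_constraint1}--\eqref{EH_constraint1} built around the previous iterate $(\mathbf{w}^f,\mathbf{P}^f)$. The crucial point is that the iterate entering Step~3 is itself feasible for this surrogate problem: the first-order expansions in \eqref{ineq_snr}--\eqref{ineq_eh} are tight \emph{in value} at the expansion point, so at $(\mathbf{w},\mathbf{P})=(\mathbf{w}^f,\mathbf{P}^f)$ the surrogate constraints coincide with the true SINR/energy-harvesting constraints, and since $\|\tilde{\mathbf{g}}_m^H\mathbf{P}\|^2$ is convex, the linearized denominator in \eqref{ineq_snr} underestimates the true one, so the surrogate feasible set is contained in the true one; hence the SINR/EH constraints satisfied by $(\mathbf{w}^f,\mathbf{P}^f,\{y_m^f\})$ at the end of the previous sweep carry over. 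Therefore the Step~3 optimum is no larger than the value at the incoming point, and \eqref{obj} is non-increasing over a full sweep. Since $(\mathbf{w},\mathbf{P})$ stays in the compact ball $\|\mathbf{w}\|^2+\|\mathbf{P}\|^2\le P_t$, $u$ and $(z,v)$ are given by bounded closed-form expressions, and $\{y_m\}$ inherits boundedness, \eqref{obj} is bounded below along the iterates, so the monotone objective-value sequence converges.

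\textbf{Stationarity.} By compactness the iterates admit a limit point $\bar{\bm{x}}=(\bar{\mathbf{w}},\bar{\mathbf{P}},\bar z,\bar u,\bar v,\{\bar y_m\})$ along a subsequence. Using the convergence of \eqref{obj} together with the strong convexity of the per-block objectives where it is available (the $u$-subproblem \eqref{step2_problem} is strongly convex with modulus bounded away from zero along the iterates, and $-\log z,-\tfrac1p\log v$ are strongly convex on the bounded range of the iterates), I would show that successive iterates along the subsequence converge to the same $\bar{\bm{x}}$, so $\bar{\bm{x}}$ is a fixed point of all three updates: $(\bar z,\bar v)$ and $\bar u$ satisfy the first-order optimality conditions of \eqref{step1_problem}--\eqref{step2_problem}, and $(\bar{\mathbf{w}},\bar{\mathbf{P}},\{\bar y_m\})$ is an optimal, hence KKT, point of the convex Step~3 problem whose surrogates are now linearized \emph{at $\bar{\bm{x}}$ itself}. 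Because the surrogates in \eqref{ineq_snr}--\eqref{ineq_eh} match the true constraint functions \emph{and their gradients} at the expansion point, the KKT system of this surrogate problem at $\bar{\bm{x}}$ is identical to the block-$\{\mathbf{w},\mathbf{P},\{y_m\}\}$ KKT conditions of \eqref{trans_prob3}. Combining the three block-wise stationarity systems, and noting that the feasible set of \eqref{trans_prob3} is a Cartesian product over the blocks (the constraints \eqref{power_constraint}, \eqref{SINR_const}, \eqref{EH_constraint} involve only block-3 variables while $z,v,u$ are unconstrained), coordinate-wise stationarity implies joint stationarity, so $\bar{\bm{x}}$ is a KKT point of \eqref{trans_prob3}, equivalently of \eqref{trans_prob2}. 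Finally, at the fixed point the auxiliary variables $(z,u,v,\{y_m\})$ equal their unique closed forms in $(\mathbf{w},\mathbf{P})$ and the reduced objective of \eqref{trans_prob2} is precisely $-\bar S(\mathbf{w},\mathbf{P},\bm{\theta},\mathcal{H})$ with the same remaining constraints \eqref{power_constraint}, \eqref{EH_constraint}; a standard reduced-problem/chain-rule argument together with Theorem~\ref{trans_theorem1} then yields that $(\bar{\mathbf{w}},\bar{\mathbf{P}})$ is a stationary solution of $\mathcal{P}_S(\bm{\theta},\mathcal{H})$.

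\textbf{Main obstacle.} The delicate part is the passage to the limit in the stationarity argument: (a) ensuring consecutive iterates converge to a common limit, which is immediate for the $u$- and $(z,v)$-blocks but not for $(\mathbf{w},\mathbf{P})$, since $z\,e(\mathbf{w},\mathbf{P},u)$ is only convex (rank-deficient) in $(\mathbf{w},\mathbf{P})$ --- here one would either invoke a Tseng-type block-coordinate result that tolerates non-unique block minimizers or append an arbitrarily small proximal term to the $(\mathbf{w},\mathbf{P})$-block; and (b) verifying carefully that the value-and-gradient consistency of the inner approximations really collapses the surrogate KKT system onto that of \eqref{trans_prob3} at the limit (matching of multipliers), and that the product structure of the feasible set legitimately upgrades coordinate-wise stationarity to joint stationarity. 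The feasibility-propagation step used in the monotonicity proof is also worth stating explicitly, as it is precisely where the CCCP inner-approximation property is invoked. These steps mirror the convergence analysis of successive convex approximation/block-successive upper-bound minimization methods \cite{Liu2018TOSCA}.
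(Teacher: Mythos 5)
Your proposal is correct and follows essentially the same route as the paper's proof: feasibility propagation via the tightness of the first-order expansions \eqref{ineq_snr}--\eqref{ineq_eh}, exact block minimization giving monotonicity of \eqref{obj}, value-and-gradient consistency of the surrogates (the paper packages this as its Lemma~2) to collapse the surrogate KKT system onto that of \eqref{trans_prob3} at the limit, and a final transfer to $\mathcal{P}_S(\bm{\theta},\mathcal{H})$ via the WMMSE-type equivalence (the paper invokes Theorem~3 of \cite{Shi2011WMMSE}). The technical obstacles you flag --- that consecutive iterates must be shown to share a common limit despite the possibly non-unique $(\mathbf{w},\mathbf{P})$-block minimizer --- are real but are glossed over in the paper's own argument, so your treatment is, if anything, more careful.
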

\begin{proof}
	Please refer to Appendix \ref{proof_of_convergence}.
\end{proof}

\begin{algorithm}
\caption{Proposed CCCP-BCD algorithm for solving $\mathcal{P}_S(\bm{\theta}, \mathcal{H})$}
\begin{algorithmic}[1] \label{inner_algorithm}
\STATE Initialize the variables $\mathbf{w}^0,\mathbf{P}^0, z^0, u^0, v^0,\{y_m^0\}$ with a feasible point. Set the iteration index $i=1$.
\REPEAT
		\STATE Update $\{z,v\}$, $u$ and $\{\mathbf{w}, \mathbf{P}, y_m\}$ successively by solving problems \eqref{step1_problem}, \eqref{step2_problem} and \eqref{step3_problem}, respectively, and obtain $\{\mathbf{w}^i,\mathbf{P}^i, z^i, u^i, v^i,y_m^i\}$.
		\STATE Set $\mathbf{P}^{f}=\mathbf{P}^i$ and $\mathbf{w}^{f}=\mathbf{w}^i$, and let $i \leftarrow i+1$.
\UNTIL{The maximum number of $J$ iterations is reached.}	
	\end{algorithmic}
\end{algorithm}

\subsection{Solving the Long-term Subproblem}
In this subsection, we propose to solve the long-term problem $\mathcal{P}_L^J$ based on the SSCA framework proposed in \cite{Liu2018TOSCA}, where an iterative algorithm is presented and in each iteration (or equivalently, at the end of each frame), a convex surrogate problem is constructed to resolve the difficulty that no closed-form expression of  $\mathbb{E}\{\bar{S} (\mathbf{w}^J(\bm{\theta},\mathcal{H}), \mathbf{P}^J(\bm{\theta},\mathcal{H}), \bm{\theta},\mathcal{H}) \}$ is available.\footnote{\color{black}Note that the objective function of problem \eqref{longterm_problem} is a complicated function, which depends on both the information and energy beamforming vectors, as well as on the statistical and effective channel information contained in $\mathcal{H}$; therefore, directly maximizing it through optimizing the RIS phase shifts is intractable.} We show that the surrogate problem can be simply constructed by replacing the objective function of $\mathcal{P}_L^J$ with a concave quadratic function which depends on the channel samples obtained at the current and preceding iterations (frames). Besides, the global optimal solution of the resulting surrogate problem can be expressed in a simple closed-form and the long-term RIS phase shifts are iteratively updated.


Specifically, according to \cite{Liu2018TOSCA}, $\mathcal{P}_L^J$ can be approximated as follows at the $t$-th iteration:
\begin{equation}\label{approximation_problem}
	\begin{aligned}
	\max_{\boldsymbol{\theta}} \; & \bar{g}^t(\boldsymbol{\theta})\\
\textrm{s.t.}\; & 0\leq\theta_n < 2\pi, \forall n \in \mathcal{N},
	\end{aligned}
\end{equation}
where  $\bar{g}^t(\boldsymbol{\theta})$ is the surrogate objective function that is given by
\begin{equation}
	\bar{g}^t(\boldsymbol{\theta})=f^t+(\mathbf{f}^t)^T(\boldsymbol{\theta}-\boldsymbol{\theta}^t) - \tau\|\boldsymbol{\theta}-\boldsymbol{\theta}^t\|^2,\label{surrogate_function}
\end{equation}
which can be viewed as a concave approximation of the objective function $\mathbb{E}\{\bar{S} (\mathbf{w}^J(\bm{\theta},\mathcal{H}), $ $\mathbf{P}^J(\bm{\theta},\mathcal{H}), \bm{\theta}) \}$ of problem \eqref{longterm_problem}. $ \tau> 0$ can be any constant. $f^t$ and $\mathbf{f}^t$ are the approximations of the objective function value $\mathbb{E}\{\bar{S} (\mathbf{w}^J(\bm{\theta}^t,\mathcal{H}), \mathbf{P}^J(\bm{\theta}^t,\mathcal{H}), \bm{\theta}^t) \}$ and the partial derivative of  $\mathbb{E}\{\bar{S} (\mathbf{w}^J(\bm{\theta},\mathcal{H}), $ $\mathbf{P}^J(\bm{\theta},\mathcal{H}), \bm{\theta}) \}$ w.r.t. $\bm{\theta}$ at $\bm{\theta} = \bm{\theta}^t$, respectively, which are iteratively updated as
\begin{equation} \label{gradient}
	\begin{aligned}
	f^{t}=& (1-\rho^t)f^{t-1}\\
	&+\frac{\rho^t}{T_c}\sum_{j=1}^{T_c} \bar{S}(\mathbf{w}^J(\bm{\theta}^t,\mathcal{H}_j^t), \mathbf{P}^J(\bm{\theta}^t,\mathcal{H}_j^t), \bm{\theta}^t,\mathcal{H}_j^t),\\
	\mathbf{f}^t=& (1-\rho^t)\mathbf{f}^{t-1}\\
	& +\frac{\rho^t}{T_c}\sum_{j=1}^{T_c}\frac{\partial \bar{S}}{\partial \boldsymbol{\theta}}\bigg|_{(\mathbf{w}^J(\bm{\theta}^t,\mathcal{H}_j^t), \mathbf{P}^J(\bm{\theta}^t,\mathcal{H}_j^t),\boldsymbol{\theta}^t, \mathcal{H}^t_j)},
	\end{aligned}
\end{equation}
with $f^{-1} = 0$ and $\mathbf{f}^{-1} = \mathbf{0}$, where $\mathcal{H}^t_j\triangleq\{\mathbf{h}_{1,j}^t, \mathbf{h}_{2,j}^t, \mathbf{F}_{1,j}^t$, $\{\mathbf{g}_{1,m,j}^t, \mathbf{g}_{2,m,j}^t\} \}, j\in\{1,\cdots,T_c\}$ denote the channel samples available at the $t$-th iteration, and the detailed derivation of $\frac{\partial \bar{S}}{\partial \boldsymbol{\theta}}\big|_{(\mathbf{w}^J(\bm{\theta}^t,\mathcal{H}_j^t), }$ $_{ \mathbf{P}^J(\bm{\theta}^t,\mathcal{H}_j^t),\boldsymbol{\theta}^t, \mathcal{H}^t_j)}$ is given in Appendix \ref{derivative}.  $\{\rho^t\}$ is a sequence of forgetting factors that satisfies $\rho^t\to 0$, $\frac{1}{\rho^t}\leq \mathcal{O}(t^{\beta})\text{ for some }\beta\in(0,1)$, and $\sum_t (\rho^t)^2<\infty$.
{\color{black} We emphasize that $\mathcal{H}^t_j, j\in\{1,\cdots,T_c\}$ are not the instantaneous channels, but are channel samples generated from the statistical CSI.} Note that as $t\rightarrow \infty $, $\bar{g}^t(\boldsymbol{\theta})$ exhibits the following asymptotic consistency properties \cite{Liu2018TOSCA}:
\begin{equation}
	\begin{aligned}
	\lim \limits_{t\rightarrow \infty} |\bar{g}^t(\boldsymbol{\theta}^t) \!-\! \mathbb{E}\{\bar{S} (\mathbf{w}^J(\bm{\theta}^t,\mathcal{H}), \mathbf{P}^J(\bm{\theta}^t,\mathcal{H}), \bm{\theta}^t) \}| = 0,\\
		\lim \limits_{t\rightarrow \infty} \| \nabla \bar{g}^t(\boldsymbol{\theta}^t) \!-\!  \nabla_{\bm{\theta}}\mathbb{E}\{\bar{S} (\mathbf{w}^J(\bm{\theta}^t,\mathcal{H}), \mathbf{P}^J(\bm{\theta}^t,\mathcal{H}), \bm{\theta}^t) \}\| = 0,
	\end{aligned}
\end{equation}
which show that the iterative approximation $\bar{g}^t(\boldsymbol{\theta})$ and its gradient can converge to the true objective function $\mathbb{E}\{\bar{S} (\mathbf{w}^J(\bm{\theta},\mathcal{H}), \mathbf{P}^J(\bm{\theta},\mathcal{H}), \bm{\theta}) \}$ and the corresponding gradient w.r.t. $\bm{\theta}$, as $t\rightarrow \infty$.

Then, the optimal solution of problem \eqref{approximation_problem} can be obtained in closed-form, which is shown as follows:
\begin{equation}
	\bar{\boldsymbol{\theta}}^t =\boldsymbol{\theta}^t+\frac{1}{2\tau}\mathbf{f}^t.
\end{equation}
Accordingly, the long-term variable $\boldsymbol{\theta}$ can be updated as
\begin{equation}
	\boldsymbol{\theta}^{t+1}=(1-\gamma^t)\boldsymbol{\theta}^t+\gamma^{t}\bar{\boldsymbol{\theta}}^t,\label{update_rule}
\end{equation}
where $\{\gamma^t\}$ is a sequence that satisfy $\gamma^t\to 0$, $\sum_t \gamma^t=\infty$, $\sum_t(\gamma^t)^2<\infty$ and $\lim_{t\to \infty} \frac{\gamma^t}{\rho^t}=0$. {\color{black}Since $\gamma^t$ satisfies $\gamma^t \in (0,1]$ and $\bar{\boldsymbol{\theta}}^t \in (0, 2\pi]^N$, the reflection phase shifts $\boldsymbol{\theta}^{t}, \forall t$, will  automatically lie in the feasible region, i.e., $(0, 2\pi]^N$, according to the updating rule in \eqref{update_rule}. } 

{\color{black}Note that for the more practical case of discrete phase shifts at the RIS, we can simply project the entries of $\boldsymbol{\theta}^{t}$  independently onto $\mathcal{F}_d \triangleq \{0,\frac{2\pi}{L}, \cdots, \frac{2\pi(L-1)}{L} \}$ ($L=2^Q$ and $Q$ denotes the number of control bits for phase-shifting per RIS element) to obtain a unit-modulus discrete solution, i.e.,
	\begin{equation}
		\hat{\theta}_n^t = \arg\min\limits_{\hat{\theta}_n^t \in \mathcal{F}_d}\; |\angle \theta_n^t - \angle \hat{\theta}_n^t|, \;\forall n \in \mathcal{N}.
	\end{equation} 
 It will be shown in Section \ref{Section_Simulation} that when $Q \geq 2 $ bits, the performance loss due to discrete phase shifts is negligible.
}

\subsection{Overall Algorithm and Convergence/Complexity Analysis} \label{convergence_complexity}
To summarize, the overall algorithm for solving problem \eqref{trans_prob1} is given in Algorithm \ref{outter_algorithm}. Besides,
Algorithm \ref{outter_algorithm} almost surely converges to the set of stationary solutions of problem \eqref{trans_prob1} and the detailed proof can be found in \cite{Liu2018TOSCA}. Due to Lemma \ref{lemma1}, the obtained solution is also the approximate stationary solution of problem \eqref{orignal_prob}.

From the above, it is observed that the complexity of Algorithm \ref{outter_algorithm} is mainly due to solving $\mathcal{P}_S(\bm{\theta}, \mathcal{H})$ in each time slot and computing $\{\mathbf{w}^J(\bm{\theta}^t,\mathcal{H}_j^t), \mathbf{P}^J(\bm{\theta}^t,\mathcal{H}_j^t)\}$ for the generated channel samples in the long-term optimization problem, which both involve solving problem \eqref{step3_problem} multiple times. By applying the basic elements of complexity analysis as used in \cite{Wang2014}, the complexity of solving problem \eqref{step3_problem} is $\mathcal{O}((N_sM)^{3.5})$. {\color{black}Furthermore, the complexity of computing the partial derivatives (as shown in Appendix \ref{derivative}) is $\mathcal{O}(N_s^2 M^2 + MN_s N_r)$.} Therefore, the overall complexity of Algorithm \ref{outter_algorithm} is given by $\mathcal{O}(T_f T_s (N_sM)^{3.5} + T_f (T_c (N_sM)^{3.5} +T_c M N_s N_r) )$.


\begin{algorithm}
\caption{Proposed SA-SSCA algorithm for solving problem \eqref{trans_prob1}}
\begin{algorithmic} [1] \label{outter_algorithm}
\STATE At the beginning of each super-frame, initialize the long-term variable $\boldsymbol{\theta}^0$ and the short-term variables $(\mathbf{w}^0, \mathbf{P}^0)$. Set the frame index $t=0$ and the time slot index $i=0$.
\REPEAT
\REPEAT
\STATE Obtain the effective CSI $\{\tilde{\mathbf{h}},\{\tilde{\mathbf{g}}_m\}\}$ with given $\bm{\theta}^t$ in time slot $i$.
\STATE Solve problem $\mathcal{P}_S(\bm{\theta}, \mathcal{H})$ using Algorithm \ref{inner_algorithm} and obtain the solution $\{\mathbf{P}^i, \mathbf{w}^i\}$.
\STATE $i\leftarrow i+1$.
\UNTIL{the frame ends, i.e., $i=(t+1)T_s - 1$.	}
\STATE Obtain $T_c$ channel samples $\{\mathcal{H}^t_j\}_{j=1}^{T_c}$ at the end of frame $t$.
\STATE Update the surrogate function \eqref{surrogate_function} using $\{\mathbf{w}^J(\bm{\theta}^t,\mathcal{H}_j^t), \mathbf{P}^J(\bm{\theta}^t,\mathcal{H}_j^t)\}$ (obtained via running Algorithm \ref{inner_algorithm}) and $\{\mathcal{H}^t_j\}_{j=1}^{T_c}$.
\STATE Solve problem \eqref{approximation_problem} to obtain $\bar{\boldsymbol{\theta}}^t$ and update $\boldsymbol{\theta}^{t+1}$ according to \eqref{update_rule}.
\STATE $t\leftarrow t+1$ .
\UNTIL{the super-frame ends, i.e., $t = T_f-1$.}
\end{algorithmic}
\end{algorithm}

%

\section{Low-Complexity Algorithm} \label{section_low_complexity}
In this section, in order to reduce the computational complexity of the proposed SA-SSCA algorithm, we present a low-complexity efficient heuristic
algorithm to solve the long-term subproblem \eqref{longterm_problem}. The proposed low-complexity algorithm is motivated by the fact that for each channel sample $\mathcal{H}^t_j$ available at the end of frame $t$, the proposed CCCP-BCD algorithm in Algorithm \ref{inner_algorithm} needs to be implemented to obtain the corresponding short-term transmit beamforming vectors at the BS, i.e., $\{\mathbf{w}^J(\bm{\theta}^t,\mathcal{H}_j^t), \mathbf{P}^J(\bm{\theta}^t,\mathcal{H}_j^t)\}$, and this step (Step 9 in Algorithm \ref{outter_algorithm}) is required to be repeatedly conducted at the end of each frame, which incurs high complexity.

To address this difficulty, we design the long-term variable $\bm{\theta}$ based on the intuition that in order to maximize the worst-case secrecy rate, we need to increase the effective channel power from the BS to the IU, i.e., $\|\tilde{\mathbf{h}}\|^2$, and in the meantime reduce the effective channel power from the BS to the EUs, i.e., $\|\tilde{\mathbf{g}}_m\|^2, \forall m$. Therefore, the proposed objective function for low-complexity long-term optimization can be formulated as follows:
\begin{equation} \label{low_complexity_metric}
	C_L(\bm{\phi}) = \mathbb{E}\left\{-\|\tilde{\mathbf{h}}\|^2 + a\sum_{m=1}^M\|\tilde{\mathbf{g}}_m\|^2\right\},
\end{equation}
where $a$ is the weighting factor that is chosen carefully to balance between desired channel power enhancement and eavesdropping channel power suppression, and the expectation is taken over the channel statistics. As a result, the lone-term subproblem can be approximated as
\begin{equation}
\begin{aligned}
	\min_{\bm{\phi}} \; &C_L(\boldsymbol{\phi})\\
	\textrm{s.t.}\; & |\phi_n|=1, \forall n \in \mathcal{N}. \label{low_prob}
\end{aligned}
\end{equation}
To proceed, we rewrite the objective function $C_L(\bm{\phi})$ as follows:
\begin{equation}
	C_L(\bar{\bm{\phi}}) \!=\! {\bar{\bm{\phi}}}^H \mathbb{E}\left\{-{\bar{\mathbf{H}}}^H{\bar{\mathbf{H}}} \!+\! a\sum_{m=1}^M{\bar{\mathbf{G}}}_m^H{\bar{\mathbf{G}}}_m\right\}{\bar{\bm{\phi}}} \!=\! {\bar{\bm{\phi}}}^H \bar{\mathbf{A}} {\bar{\bm{\phi}}},
\end{equation}
where ${\bar{\mathbf{H}}} = [\mathbf{F}_1\text{diag}(\mathbf{h}_2), \mathbf{h}_1]$, ${\bar{\mathbf{G}}}_m = [\mathbf{F}_1\text{diag}(\mathbf{g}_{2, m}), \mathbf{g}_{1,m}]$, ${\bar{\bm{\phi}}}=[\boldsymbol{\phi}^T, 1]^T$ and ${\bar{\mathbf{A}}} = \mathbb{E}\{-{\bar{\mathbf{H}}}^H{\bar{\mathbf{H}}}+a\sum_{m=1}^M{\bar{\mathbf{G}}}_m^H{\bar{\mathbf{G}}}_m\}$. Note that the matrix ${\bar{\mathbf{A}}}$ depends on the channel statistics, which can be easily obtained via sample averaging.

Then, it is observed that $C_L(\bar{\bm{\phi}})$ is a quadratic function w.r.t. $\bar{\bm{\phi}}$, which, however, may not be convex since $\bar{\mathbf{A}}$ is not always positive semidefinite. Therefore, problem \eqref{low_prob} is still a non-convex problem with uni-modular constraints. In the literature, various methods have been proposed to address a similar problem by using, e.g., the SSCA method as introduced above, the semidefinite relaxation (SDR) method \cite{Wu2018_journal}, the BCD method, and the penalty dual decomposition (PDD) method \cite{zhao2019intelligent}. In this work, we employ the last one, since it enables the optimization of RIS phase shifts in parallel and its computational complexity is low. Interested readers may refer to \cite{zhao2019intelligent} for the details. Besides, it is noteworthy that for the proposed low-complexity algorithm, the weighting factor $a$ should be carefully chosen to achieve good performance. In principle, a one-dimensional search should be conducted to find the value of $a$, which, however, may result in high computational complexity. To address this problem, we propose a simple method to determine the value of $a$ that can  empirically achieve good performance in our numerical simulations, which is shown as follows:
\begin{equation}
	a = \frac{\mathbb{E}\{\|\mathbf{h}_1\|^2\}}{\mathbb{E}\left\{\frac{1}{M}\sum\limits_{m=1}^M\|\mathbf{g}_{1,m}\|^2\right\}}.
\end{equation}
Note that the intuition here is to balance the BS-IU and BS-EU effective channel powers approximately. Finally, similar to the complexity analysis in Section \ref{convergence_complexity}, the complexity of the proposed low-complexity algorithm can be expressed as $\mathcal{O}(T_f T_s (N_sM)^{3.5} + I N_r^2 )$, where $I$ denotes the number of iterations required to solve problem \eqref{low_prob} (including both outer and inner iterations).

%

\section{Simulation Results} \label{Section_Simulation}
In this section, we provide numerical results by simulations to verify the effectiveness of the proposed algorithms and draw useful insights. We assume that  the RIS is equipped with $ N_r = 80$ reflecting elements without loss of generality. The path losses of $\mathbf{F}_1$, $\mathbf{h}_1$, $\mathbf{h}_2$,
$\mathbf{g}_{1,m}$'s and  $\mathbf{g}_{2,m}$'s are modeled as $L^{\textrm{BR}} = C_0\left({d^{\textrm{BR}}}/{D_0}\right)^{-\alpha^{\textrm{BR}}}$, $L^{\textrm{BI}} = C_0\left({d^{\textrm{BI}}}/{D_0}\right)^{-\alpha^{\textrm{BI}}}$, $L^{\textrm{RI}} = C_0\left({d^{\textrm{RI}}}/{D_0}\right)^{-\alpha^{\textrm{RI}}}$, $L_k^{\textrm{BE}} = C_0\left({d_k^{\textrm{BE}}}/{D_0}\right)^{-\alpha_k^{\textrm{BE}}}$ and $L_k^{\textrm{RE}} = C_0\left({d_k^{\textrm{RE}}}/{D_0}\right)^{-\alpha_k^{\textrm{RE}}}$, where $D_0 = 1$ meter (m) denotes the reference distance, $C_0 = -30$ dB denotes the path loss at the reference distance,  ${d^{\textrm{BR}}}$, ${d^{\textrm{BI}}}$, ${d^{\textrm{RI}}}$, ${d_k^{\textrm{BE}}}$ and ${d_k^{\textrm{RE}}}$ denote the link distance from the BS to the RIS, from the BS to the IU, from the RIS to the IU, from the BS to EU $m$, and from the RIS to EU $m$, respectively, while $\alpha^{\textrm{BR}}$, $\alpha^{\textrm{BI}}$, $\alpha^{\textrm{RI}}$, $\alpha_k^{\textrm{BE}}$, and $\alpha_k^{\textrm{RE}}$ denote the corresponding path loss factors. We assume that the RIS is deployed to serve the users that suffer from severe signal attenuation in the BS-user direct links and thus we set $\alpha^{\textrm{BI}} = \alpha_k^{\textrm{BE}} = 3.6$ and $\alpha^{\textrm{BR}} = \alpha^{\textrm{RI}} = \alpha_k^{\textrm{RE}} = 2.2$, i.e., the path-loss exponent of the BS-user link is larger than those of the BS-RIS and RIS-user links.
Besides, in our simulations, a three-dimensional coordinate system is considered, as shown in Fig. \ref{fig:usersetup}, where the BS (equipped with a uniform linear array (ULA)) and the RIS (equipped with a uniform planar array (UPA)) are located on the $x$-axis and $y$-$z$ plane, respectively.
The reference antenna/element at the BS/RIS are located at $(6\;\textrm{m}, 0, 0)$ and $(0\;\textrm{m}, 2.5\;\textrm{m}, 3\;\textrm{m})$, the location of the IU is $(6\;\textrm{m}, 200\;\textrm{m}, 0)$ and the EUs are located in a circle centering at the AP with a radius of $5$ m.
To account for small-scale fading, we assume the general Rician fading channel model for all channels involved and the details are similar to those in \cite{zhao2019intelligent}. The Rician factors of the BS-user links are set to $0$ dB, while those of the BS-RIS and RIS-user links are set to $3$ dB, without loss of generality.  Other system parameters are set as follows unless otherwise specified:  $\sigma_{I}^2 = \sigma_{E,m}^2 = -80$ dBm, $\forall m$, $P_t=45$ dBm, $M=6$, $N_s = 2$, $\epsilon_m =\epsilon= 2\; \mu $w, $\forall m$, $p=4$. All the results are averaged over 1000 independent channel realizations.

To illustrate the performance gain of our proposed algorithms, we consider the following benchmark schemes.
\begin{itemize}
	\item Instantaneous CSI based scheme: the transmit beamforming vectors at the BS and the RIS phase shifts are iteratively and alternatively optimized by employing the BCD method and assuming perfect instantaneous CSI over all time slots.
	\item BS-IU effective channel power maximization scheme: the RIS phase shifts are designed by solving problem \eqref{low_prob} with fixed $a=0$ (i.e., maximizing the BS-IU effective channel power), and the active beamforming vectors are designed according to Algorithm \ref{inner_algorithm} with fixed RIS phase shifts.
	\item Random phase-shift scheme:  the phase shifts at the RIS are randomly generated at each time slot, and the active beamforming vectors are designed according to Algorithm \ref{inner_algorithm}.
\end{itemize}

\begin{figure}
\centering
	\setlength{\abovecaptionskip}{-0cm}
\setlength{\belowcaptionskip}{-0cm}
\scalebox{0.58}{\includegraphics{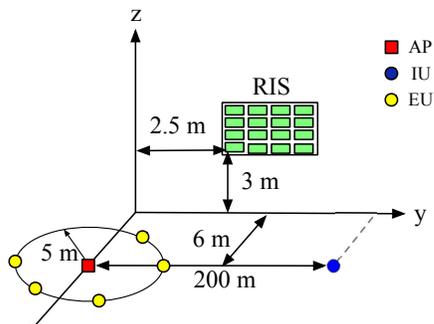}}
\caption{Simulation setup.}\label{fig:usersetup}
\end{figure}

{\color{black}
\subsection{Impact of Batch Size, $T_c$}
First, in Fig. \ref{fig_compare_convergence}, we investigate the convergence behavior of the proposed SA-SSCA algorithm for different values of $T_c$. Note that $T_c$ can be regarded as the batch size for the SA-SSCA algorithm, as it represents the number of channel/training samples utilized in one iteration. As seen, the choice of $T_c$ does not affect the convergence speed and steady state performance much and the proposed algorithm converges for all values of $T_c$ tested in $\{1,2,4,8,16\}$. However, the curves with larger $T_c$ are more smooth, which is expected since the partial derivative $\mathbf{f}^t$ (defined in \eqref{gradient}) extracted from a larger number of channel samples is generally more accurate than that with a smaller $T_c$. 
\begin{figure}[t]
	\centering
	\setlength{\abovecaptionskip}{-0cm}
	\setlength{\belowcaptionskip}{-0cm}
	\scalebox{0.48}{\includegraphics{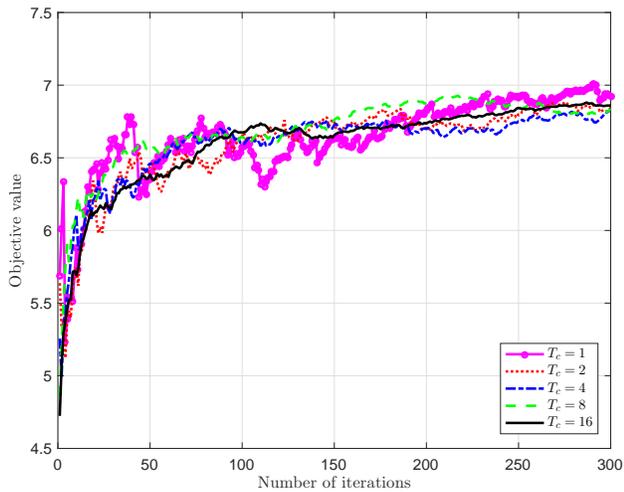}}
	\caption{Convergence behavior of the proposed SA-SSCA algorithm with different values of $T_c$.}\label{fig_compare_convergence}
\end{figure}
}

\subsection{Impact of BS Transmit Power Budget, $P_t$}
\begin{figure}[t]
	\centering
		\setlength{\abovecaptionskip}{-0cm}
	\setlength{\belowcaptionskip}{-0cm}
	\scalebox{0.48}{\includegraphics{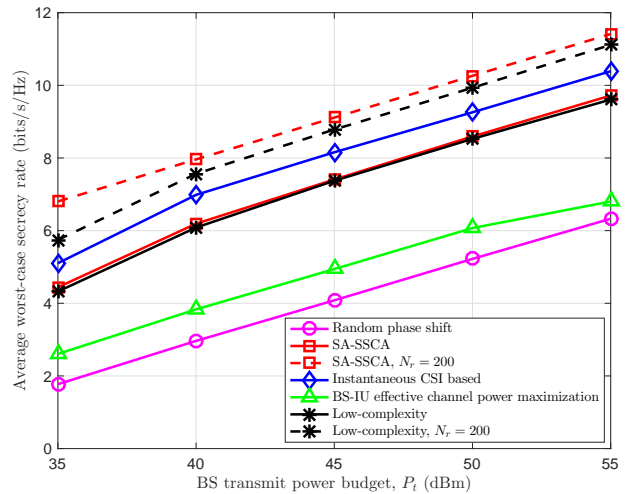}}
	\caption{Average worst-case secrecy rate versus BS transmit power budget, $P_t$.}\label{fig1}
\end{figure}
Second, we investigate the impact of the BS transmit power budget, i.e., $P_t$,  on the system performance. In this experiment, we fix all the other system parameters as mentioned above and change $P_t$ from $35$ dBm to $55$ dBm. As shown in Fig. \ref{fig1}, the performance of all the schemes improves with the increasing of $P_t$ and {\color{black}the instantaneous CSI based scheme achieves the best performance, followed by the proposed two-timescale schemes}, i.e., the SA-SSCA algorithm introduced in Section \ref{sec_proposed_algorithm}, the low-complexity algorithm in Section \ref{section_low_complexity}, the BS-IU effective channel power maximization scheme, and the random phase-shift scheme.
Besides, it can be seen that the performance of the proposed low-complexity algorithm is very close (within 0.1 bits/s/Hz) to that of the SA-SSCA algorithm when $N_r=80$, which verifies the effectiveness of the proposed metric in \eqref{low_complexity_metric} designed for updating the long-term phase shifts. {\color{black}Their performance gap enlarges when $N_r$ is larger (the performance when $N_r=200$ is shown in Fig. \ref{fig1}), which is also reasonable since typically more sophisticated reflection control is required for a larger RIS and  the strategy adopted by the low-complexity algorithm is relatively simple.}

\subsection{Impact of Number of EUs, $M$}
Then, we investigate the performance of the considered schemes under various numbers of EUs, as shown in Fig. \ref{fig2}. It is observed that the performance of all the schemes gradually deteriorates with more EUs. This is expected since as $M$ increases, the probability that the maximum achievable rate of the EUs becomes larger increases. Besides, due to the more energy harvesting constraints when $M$ increases, less power will be allocated for information transmission, which also leads to lower secrecy rate. {\color{black} By comparing the performance of the schemes considered, we can see that  similar trends to those in Fig.~\ref{fig1} can be observed. In particular, the performance gap between the instantaneous CSI based scheme and the proposed SA-SSCA and low-complexity algorithms slightly expands as $M$ increases}, which is reasonable since allowing the reflection phase shifts to be tuned in each time slot provides more flexibility for improving the secrecy and SWIPT performance.


\begin{figure}[t]
	\centering
		\setlength{\abovecaptionskip}{-0cm}
	\setlength{\belowcaptionskip}{-0cm}
	\scalebox{0.48}{\includegraphics{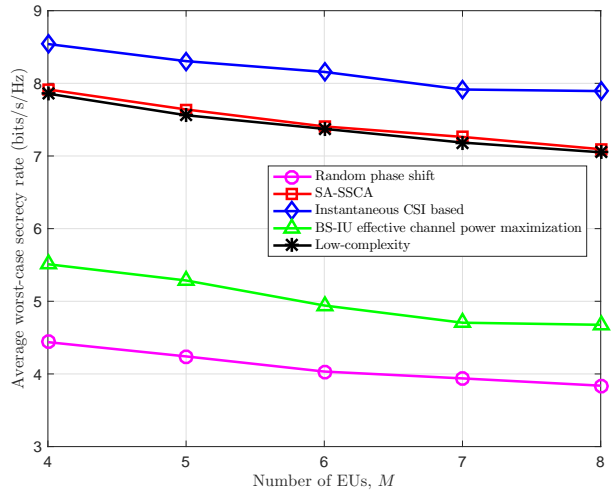}}
	\caption{Average worst-case secrecy rate versus number of EUs, $M$.}\label{fig2}
\end{figure}

\subsection{Impact of Minimum RF Receive Power Requirement, $\epsilon$}
Next, we study in Fig. \ref{fig3} the impacts of the RF receive power requirement $\epsilon$ on the secrecy performance, where $M$ is set to 4 and the other system parameters are set as default. It is observed that the worst-case secrecy rate performance gradually decreases with the increasing of $\epsilon$, which is expected since the IU and EUs are far apart in our setup and a larger portion of transmit power is allocated to satisfy the energy harvesting requirements of the EUs as $\epsilon$ increases. Besides, we also observe that the worst-case secrecy rate does not drop much (within $0.5$ bits/s/Hz) as $\epsilon$ increases from $4\;\mu$w to $60\;\mu$w. This is due to the fact that the energy harvesting constraints are relatively easier to address as compared to maximizing the worst-case secrecy rate, since the former is only related with the BS-EU and BS-RIS-EU links, while the latter should further take the BS-IU and BS-RIS-IU links into consideration. Therefore, by sacrificing a small amount of worst-case secrecy rate, the harvested energy can be significantly increased.
\begin{figure}[t]
	\centering
		\setlength{\abovecaptionskip}{-0cm}
	\setlength{\belowcaptionskip}{-0cm}
	\scalebox{0.48}{\includegraphics{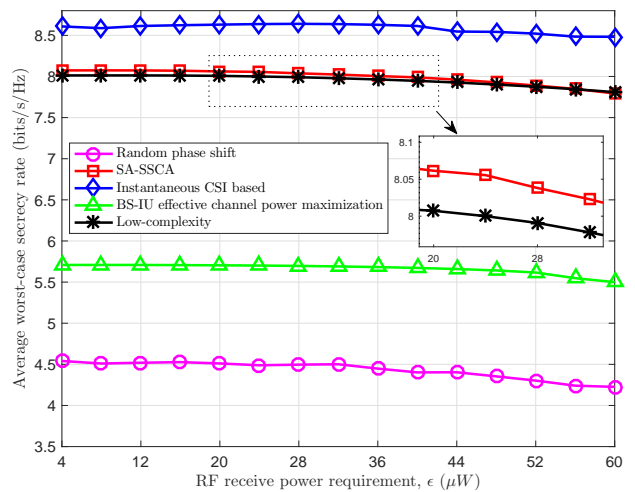}}
	\caption{Average worst-case secrecy rate versus minimum RF receive power requirement, $\epsilon$.}\label{fig3}
\end{figure}

\subsection{Impact of Number of RIS Reflecting Elements, $N_r$}
In Fig. \ref{fig4}, we show the average worst-case secrecy rate performance of the considered schemes under various numbers of reflecting elements, $N_r$.
It can be observed that the performance of all the algorithms, except for the BS-IU effective power maximization scheme, improves with the increasing of $N_r$. This is mainly due to the fact that larger $N_r$ leads to higher aperture gain and in the meantime more fine-grained reflect beamforming, that is able to improve the overall system performance. However, for the BS-IU effective power maximization scheme, the IU's receive signal power gradually gets saturated as $N_r$ increases and since the EUs have much shorter distances to the RIS and AP as compared to the IU, their receive signal power increases more quickly with $N_r$, which in turn decreases the secrecy rate.

\begin{figure}[t]
	\centering
		\setlength{\abovecaptionskip}{-0cm}
	\setlength{\belowcaptionskip}{-0cm}
	\scalebox{0.48}{\includegraphics{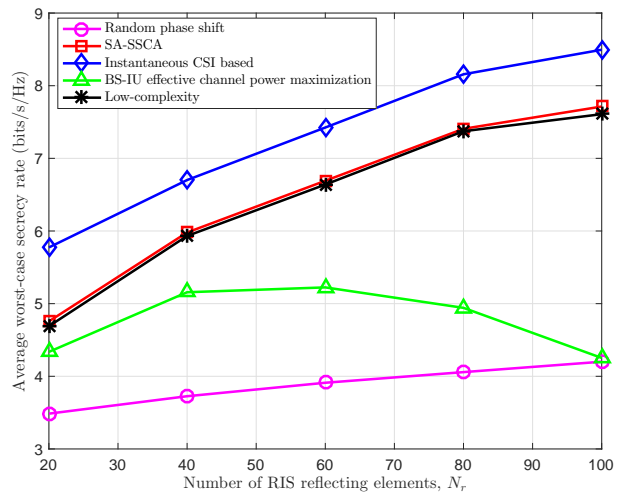}}
	\caption{Average worst-case secrecy rate versus number of RIS reflecting elements, $N_r$.}\label{fig4}
\end{figure}

{\color{black}
\subsection{Impact of Discrete Phase Shifts}
Next, in Fig. \ref{figure_compare_Q} we portray the impact of discrete phase shifts on the performance of the proposed algorithm. It is observed that as the number of control bits per RIS element (i.e., $Q$) increases, the performance gap between the discrete phase-shift and the continuous phase-shift case  gradually decreases and when $Q=3$, using RISs relying on discrete phase shifters incurs only negligible performance erosion. Furthermore, it can be seen that compared to the proposed SA-SSCA algorithm, the instantaneous CSI based scheme is in general more sensitive to $Q$, and it requires about $4$ bits to achieve similar performance to that of the continuous phase-shift scenario.
}
\begin{figure}[t]
	\centering
	\setlength{\abovecaptionskip}{-0cm}
	\setlength{\belowcaptionskip}{-0cm}
	\scalebox{0.48}{\includegraphics{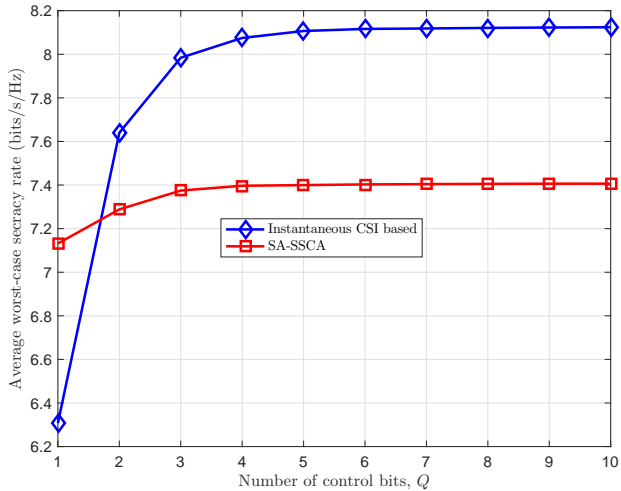}}
	\caption{Average worst-case secrecy rate versus number of quantization bits per RIS element, $Q$.}\label{figure_compare_Q}
\end{figure}

\subsection{Impact of CSI Delay}
Then, we investigate the impact of CSI delay (for BS-user effective channel estimation) on the worst-case secrecy rate performance. In this simulation, we adopt a dynamic CSI model as in \cite{biglieri2007mimo} to model the CSI delay, for example, the BS-RIS channel at time $s$ can be modeled as
\begin{equation}
	\mathbf{F}_1(s) = \rho(s) \mathbf{F}_1(s) + (1-\rho(s)) \mathbf{F}_{1,m},
\end{equation}
where $\mathbf{F}_{1,m}$ is the channel mean, $\rho(s) = J_0(2 \pi f_d s)$, $f_d$ is the channel Doppler spread, and $J_0(\cdot)$ is the zeroth-order Bessel function of the first kind. The delay of the other channels are modeled similarly. We assume that the CSI delay is proportional to the number of  channel coefficients that are required to be estimated, i.e., if the CSI delay for BS-user effective channel estimation (i.e., $\tilde{\mathbf{h}}$ and $\tilde{\mathbf{g}}_m$) is $\omega$ millisecond (ms), then the CSI delay for estimating the full channel sample $\{\mathbf{F}_1, \mathbf{h}_1, \mathbf{h}_2, \mathbf{g}_{1,m}, \mathbf{g}_{2,m}\}$ is $\frac{N_s(M+1)+N_rN_s+ N_r(M+1)}{N_s(M+1)} \omega$ ms. Note that due to the adopted two-timescale-based scheme, the CSI delays in the proposed algorithms are much smaller than that in the instantaneous CSI based scheme. As shown in Fig. \ref{fig5}, the performance of the instantaneous CSI based scheme deteriorates very quickly with the increasing of the CSI delay, while those of the two-timescale-based schemes are almost invariant for different values of $\omega$ (at least in the range $[0\;\textrm{ms} - 20\;\textrm{ms}]$). This result shows the superiority of the proposed two-timescale-based schemes in practice as the channel estimation overhead and implementation cost can both be reduced, while the worst-case secrecy rate performance is not compromised much. {\color{black}It is important to mention that the proposed two-timescale-based schemes still have to estimate a number of full channels for statistical CSI estimation, albeit not instantaneously.}

\begin{figure}[t]
	\centering
	\setlength{\abovecaptionskip}{-0cm}
	\setlength{\belowcaptionskip}{-0cm}
	\scalebox{0.48}{\includegraphics{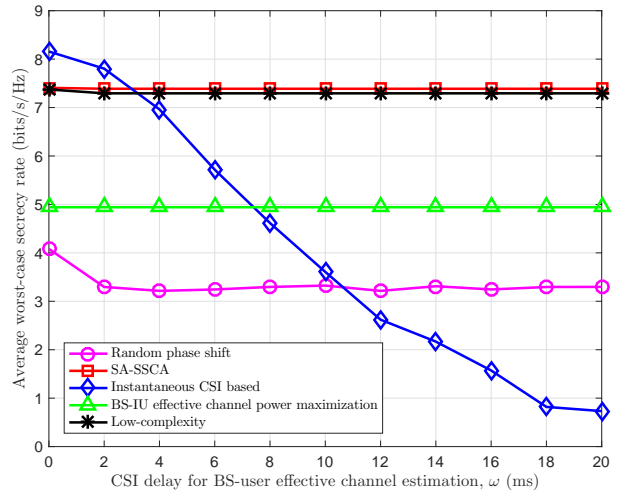}}
	\caption{Average worst-case secrecy rate versus CSI delay for BS-user effective channel estimation, $\omega$.}\label{fig5}
\end{figure}

{\color{black}
\subsection{Impact of Imperfect Statistical CSI}
In Fig. \ref{fig_final}, we study the impact of imperfect statistical CSI on the worst-case secrecy rate. For simplicity, we assume that the channel samples generated are not perfect and the channel errors follow complex Gaussian distributions. For example, the BS-IU channel samples $\hat{\mathbf{h}}_{1,j}$ are generated according to $\hat{\mathbf{h}}_{1,j} = \mathbf{h}_{1,j} + \Delta \mathbf{h}_{1,j}$, where $\Delta \mathbf{h}_{1,j} \in \mathcal{CN}(\mathbf{0}, \delta_{1,j}^2 \mathbf{I})$, $\delta_{1,j} = \sqrt{b_{1,j} L^{\textrm{BI}}}$, and $b_{1,j}$ is a parameter controlling the error variance. The other channel samples are generated similarly,  and we assume that the control parameters of all channel errors are the same (denoted by $b$). It is observed that the performance of the proposed SA-SSCA and low-complexity algorithms does not change much in the presence of statistical CSI errors, and the average worst-case secrecy rate decreases only marginally  (less than 1$\%$) when $b$ is higher than $-10$ dB. Therefore, the proposed algorithms are robust against statistical CSI errors having Gaussian distributions. 
}

{\color{black}
	\subsection{Impact of Time-Variant Statistical CSI} \label{section_time_variant_S_CSI}
	Finally, we show the effect of rapidly-fluctuation channel statistics on the performance of the proposed SA-SSCA algorithm, which is depicted in Fig. \ref{fig_final_S_CSI}. In this simulation, we assume that the channel statistics mainly change owing to the user mobility and the IU is moving from $(6\;\textrm{m}, 200\;\textrm{m}, 0)$ to $((6+ x_{mo})\;\textrm{m}, 200\;\textrm{m}, 0)$ along the $x$ axis. When the IU moves, the LoS components will change, since the azimuth and elevation angles between the IU and BS/RIS dynamically fluctuate. Note that in practice the time-variant channel statistics might be more complex, but the model considered is still useful since it captures the key feature that the channel statistics gradually decorrelate with time. For the proposed SA-SSCA algorithm, we consider two cases. In the first case the long-term phase shifts are optimized using outdated channel statistics obtained when the IU is located at $(6\;\textrm{m}, 200\;\textrm{m}, 0)$, while in the second case updated channel statistics at $((6+ x_{mo})\;\textrm{m}, 200\;\textrm{m}, 0)$ are available. As seen from Fig. \ref{fig_final_S_CSI}, the performance of all the algorithms considered degrades upon increasing $x_{mo}$, which is expected since the distances between the IU and BS/RIS increase with $x_{mo}$. Furthermore, one can observe that using outdated channel statistics only leads to minor performance degradations (about 0.1 bits/s/Hz when $x_{mo} \leq 170\;\textrm{m}$). This means that the proposed SA-SSCA algorithm is quite robust against time-variant statistical CSI variations under the model considered.
}

\begin{figure}[t]
	\centering
	\setlength{\abovecaptionskip}{-0cm}
	\setlength{\belowcaptionskip}{-0cm}
	\scalebox{0.48}{\includegraphics{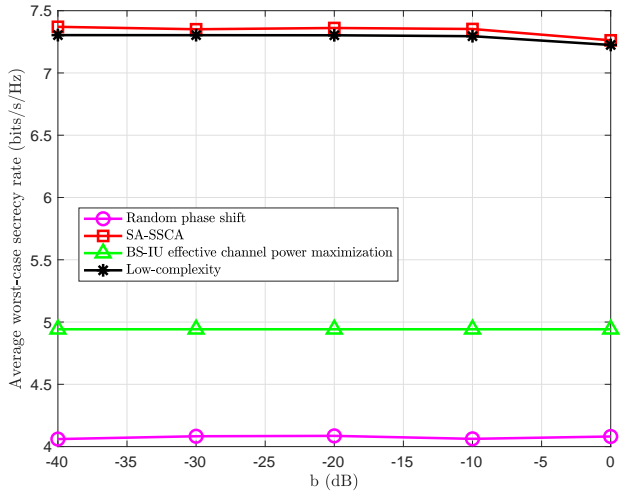}}
	\caption{Average worst-case secrecy rate versus $b$.}\label{fig_final}
\end{figure}

\begin{figure}[t]
	\centering
	\setlength{\abovecaptionskip}{-0cm}
	\setlength{\belowcaptionskip}{-0cm}
	\scalebox{0.48}{\includegraphics{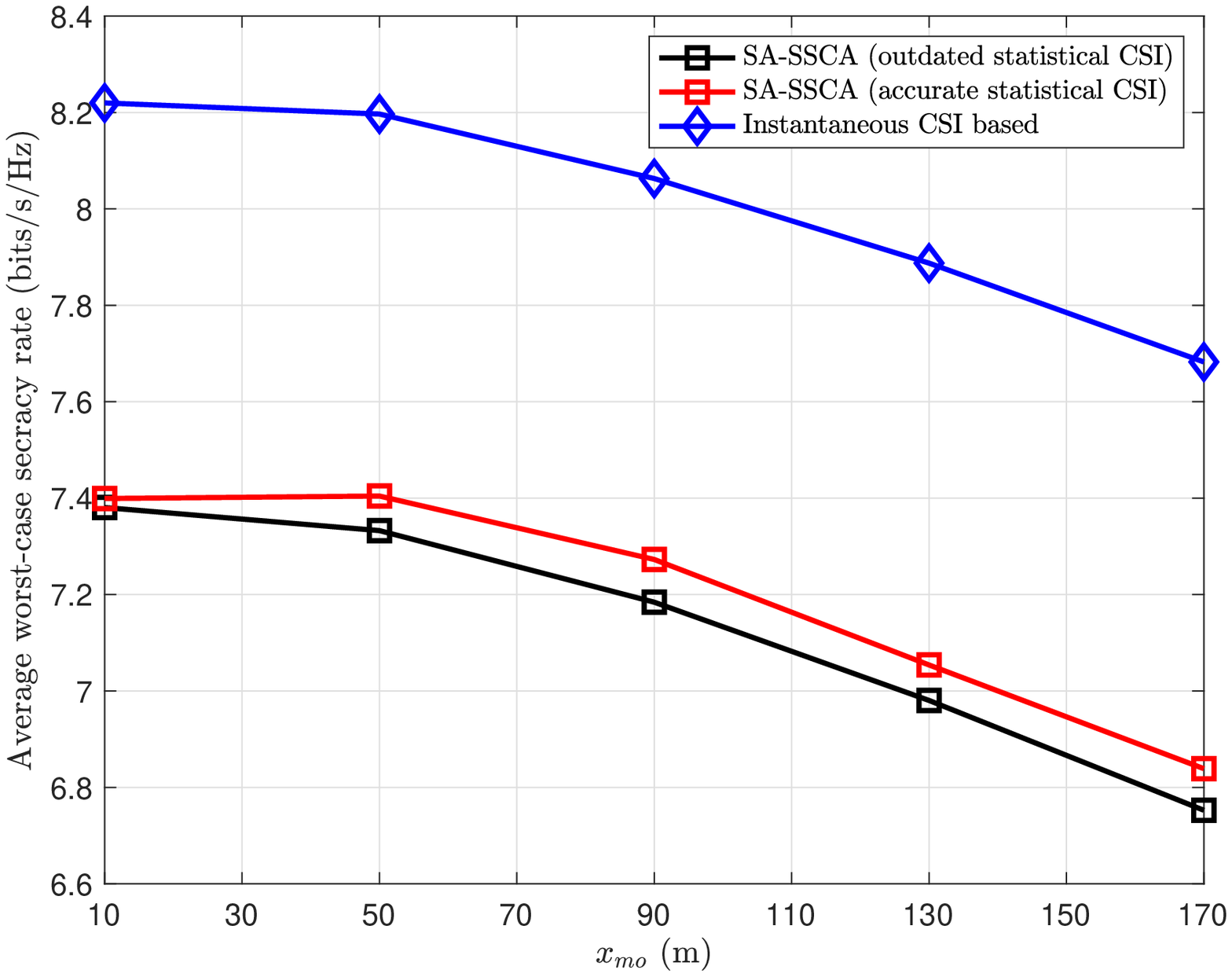}}
	\caption{Average worst-case secrecy rate versus $x_{mo}$.}\label{fig_final_S_CSI}
\end{figure}

\section{Conclusions} \label{sec_conclusions}
In this paper, we investigated a new two-timescale-based beamforming optimization problem for secrecy rate maximization in a RIS-aided SWIPT system, where the EUs are also potential eavesdroppers. Specifically, the short-term active beamforming vectors at the BS (including information and energy beams) and the long-term passive reflecting phase shifts at the RIS are jointly optimized to maximize the average worst-case secrecy rate subject to both the power constraint at the BS and energy harvesting constraints at the EUs. We proposed two algorithms (namely the SA-SSCA algorithm and the low-complexity algorithm) to achieve a balance between the secrecy performance and computational complexity. In particular, the proposed SA-SSCA algorithm was shown to achieve better performance (with guaranteed convergence) as compared with other benchmark schemes, while the low-complexity algorithm is able to provide competitive performance with reduced complexity. Simulation results validated the effectiveness of RIS for guaranteeing security as well as enhancing SWIPT performance. Besides, it was found that the two-timescale transmission scheme is feasible for secure SWIPT systems, while the required channel estimation overhead is significantly reduced.

\begin{appendices}
	
\section{Proof of Theorem \ref{trans_theorem1}} \label{proof_theorem1}
Since the introduced auxiliary variables $z$, $u$ and $v$ only appear in the objective function of problem \eqref{trans_prob2}, the optimal solution of problem \eqref{trans_prob2} must satisfy the following first-order optimality conditions:
\begin{equation}\label{zuv_solution}
\begin{aligned}
z^{\star}&=\frac{1}{e(\mathbf{w}^{\star}, \mathbf{P}^{\star},u^{\star})},\;
v^{\star}=\frac{1}{\tilde{e}(\mathbf{w}^{\star},\mathbf{P}^{\star})},\\
u^{\star} & = \frac{\tilde{\mathbf{h}}^H\mathbf{w}^{\star}}{|\tilde{\mathbf{h}}^H\mathbf{w}^{\star}|^2+\sigma^2+\|\tilde{\mathbf{h}}^H\mathbf{P}^{\star}\|^2}.
\end{aligned}
\end{equation}
Then, by substituting \eqref{zuv_solution} back into the objective function of problem \eqref{trans_prob2}, we can readily obtain $\mathcal{P}_S(\bm{\theta}, \mathcal{H})$. This thus completes the proof.

\section{Proof of Theorem \ref{theorem_convergence}} \label{proof_of_convergence}
First, for notational simplicity, we define the following functions:
\begin{equation}
	\begin{aligned}
&	\bar{E}(\mathbf{w},\mathbf{P}, z, u, v, y_m)   \triangleq  ze(\mathbf{w},\mathbf{P},u)-\log(z)\\
	&\quad \quad \quad \quad \quad +\frac{1}{p}v\left(\sum_{m \in \mathcal{M}}(1+y_m)^{p}\right)-\frac{1}{p}\log(v),\\
& \tilde{g}_m(y_m, \mathbf{w},\mathbf{P}; \mathbf{P}^{f})  \\
& \quad \triangleq \frac{|\tilde{\mathbf{g}}_m^H\mathbf{w}|^2}{\!-\! \|\tilde{\mathbf{g}}_{m}^H\mathbf{P}^{f}\|^2 \!+\! 2\Re\{\tilde{\mathbf{g}}_{m}^H\mathbf{P}^{f} \mathbf{P}^H\tilde{\mathbf{g}}_{m} \}\!+\! \sigma_{E,m}^2} \!-\! y_m,\\
& g_m(y_m, \mathbf{w}, \mathbf{P})  \triangleq  \textrm{SINR}_m^{\textrm{E}}  (\mathbf{w}, \mathbf{P}, \bm{\theta}) - y_m,\\
&	\tilde{\varepsilon}_m(\mathbf{w}, \mathbf{P};\mathbf{w}^{f}, \mathbf{P}^{f})  \triangleq   \epsilon_m+|\tilde{\mathbf{g}}_{m}^H\mathbf{w}^{f}|^2+\|\tilde{\mathbf{g}}_{m}^H\mathbf{P}^{f}\|^2 \\
	& \quad \quad \quad \quad \quad \quad - 2\Re\{\tilde{\mathbf{g}}_{m}^H\mathbf{w}^{f}\mathbf{w}^H\tilde{\mathbf{g}}_{m} +\tilde{\mathbf{g}}_{m}^H\mathbf{P}^{f}\mathbf{P}^H\tilde{\mathbf{g}}_{m} \},\\
&	\varepsilon_m(\mathbf{w}, \mathbf{P})  \triangleq  \epsilon_m - |\tilde{\mathbf{g}}_{m}^H\mathbf{w}|^2-\|\tilde{\mathbf{g}}_{m}^H\mathbf{P}\|^2,\\
&	P_{c}(\mathbf{w}, \mathbf{P})  \triangleq  \|\mathbf{w}\|^2+\|\mathbf{P}\|^2,
	\end{aligned}
\end{equation}
where $f(x;y)$ means that $x$ is a variable of the function $f(\cdot;\cdot)$ and $y$ is a given parameter.\footnote{In the following, we may drop the arguments in the function $f(x;y)$, and simply use $f$ in the sequel of this paper when there is no ambiguity.}
Then, we prove that every iterate $\{ \mathbf{w}^{i}, \mathbf{P}^{i},z^{i}, u^{i}, v^{i}\}$ generated by Algorithm \ref{inner_algorithm} is feasible to problem \eqref{trans_prob3}. Suppose that $\{ \mathbf{w}^{i}, \mathbf{P}^{i},z^{i}, u^{i}, v^{i}\}$ is feasible to problem \eqref{trans_prob3}, then according to the inequalities in \eqref{ineq_snr} and \eqref{ineq_eh}, we have $\varepsilon_m(\mathbf{w}^{i+1}, \mathbf{P}^{i+1})\leq\tilde{\varepsilon}_m(\mathbf{w}^{i+1}, \mathbf{P}^{i+1};\mathbf{w}^{i}, \mathbf{P}^{i})\leq 0$ and $g_m(y_m^{i+1}, \mathbf{w}^{i+1}, \mathbf{P}^{i+1})\leq \tilde{g}_m(y_m^{i+1}, \mathbf{w}^{i+1},\mathbf{P}^{i+1}; \mathbf{P}^{i})\leq 0$, which implies that $\{ \mathbf{w}^{i+1}, \mathbf{P}^{i+1},z^{i+1}, u^{i+1},$ $ v^{i+1}\}$ is feasible to problem \eqref{trans_prob3}. Therefore, if Algorithm \ref{inner_algorithm} is initialized with a feasible solution, the subsequent solutions generated by Algorithm \ref{inner_algorithm} are all feasible to problem \eqref{trans_prob3}.

Next, we show that the sequence of the objective values generated by Algorithm \ref{inner_algorithm} is non-increasing over the iterations. Specifically, it is readily seen that the global optimal solutions of problems \eqref{step1_problem}, \eqref{step2_problem} and \eqref{step3_problem} can be obtained, i.e., each step of Algorithm \ref{inner_algorithm} minimizes $\bar{E}$ with the other variables being fixed. Thus, we have $\bar{E}(\mathbf{w}^{i},\mathbf{P}^{i}, z^{i}, u^{i}, v^{i}, y_m^{i})\geq \bar{E}(\mathbf{w}^{i},\mathbf{P}^{i}, z^{i+1}, u^{i}, v^{i+1}, y_m^{i})\geq \bar{E}(\mathbf{w}^{i},\mathbf{P}^{i}, z^{i+1}, u^{i+1}, v^{i+1}, y_m^{i})\geq \bar{E}(\mathbf{w}^{i+1},\mathbf{P}^{i+1}, z^{i+1} $ $, u^{i+1}, v^{i+1}, y_m^{i+1})$. Consequently, we can prove the monotonic convergence of the sequence $\{\bar{E}(\mathbf{w}^{i},\mathbf{P}^{i}, z^{i}$, $u^{i}, v^{i}, y_m^{i})\}$.

Then, we are ready to prove that any limit point $\{\mathbf{w}^{\star},\mathbf{P}^{\star}, z^{\star}, u^{\star}, v^{\star}, y_m^{\star}\}$ of the iterates generated by Algorithm \ref{inner_algorithm} is a stationary solution of problem \eqref{trans_prob3}. To proceed, we first give the following lemma.
\begin{lemma}\label{lemma 2}
Let $f(\mathbf{x})= g(y(\mathbf{x}),\mathbf{x})$, we have that if $\tilde{y}(\mathbf{x};\mathbf{x}^f)|_{\mathbf{x}=\mathbf{x}^f}=y(\mathbf{x}^f)$ and $\frac{\partial \tilde{y}(\mathbf{x};\mathbf{x}^f)}{\partial \mathbf{x}}|_{\mathbf{x}=\mathbf{x}^f}=\frac{d y(\mathbf{x})}{d\mathbf{x}}|_{\mathbf{x}=\mathbf{x}^f}$, then $\frac{d f(\mathbf{x})}{d\mathbf{x}}|_{\mathbf{x}=\mathbf{x}^f}=\frac{d g(\tilde{y}(\mathbf{x};\mathbf{x}^f),\mathbf{x})}{d \mathbf{x}}|_{\mathbf{x}=\mathbf{x}^f}$.
\end{lemma}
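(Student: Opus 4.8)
The statement to prove is \textbf{Lemma \ref{lemma 2}}, which is a chain-rule / implicit-function type identity: if a surrogate $\tilde{y}(\mathbf{x};\mathbf{x}^f)$ matches the true function $y(\mathbf{x})$ in both value and gradient at the expansion point $\mathbf{x}=\mathbf{x}^f$, then composing $g$ with the surrogate yields the same total derivative at $\mathbf{x}^f$ as composing $g$ with the true $y$. The plan is to simply apply the multivariate chain rule to both sides and compare term by term.

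First I would write $f(\mathbf{x}) = g(y(\mathbf{x}),\mathbf{x})$ and differentiate using the chain rule, treating the first slot of $g$ and the explicit dependence on $\mathbf{x}$ separately:
\begin{equation}
\frac{d f(\mathbf{x})}{d\mathbf{x}}\bigg|_{\mathbf{x}=\mathbf{x}^f} = \frac{\partial g}{\partial y}\bigg|_{(y(\mathbf{x}^f),\mathbf{x}^f)}\cdot \frac{d y(\mathbf{x})}{d\mathbf{x}}\bigg|_{\mathbf{x}=\mathbf{x}^f} + \frac{\partial g}{\partial \mathbf{x}}\bigg|_{(y(\mathbf{x}^f),\mathbf{x}^f)}. \notag
\end{equation}
Next I would do the same for the composed surrogate $h(\mathbf{x}) \triangleq g(\tilde{y}(\mathbf{x};\mathbf{x}^f),\mathbf{x})$, where the parameter $\mathbf{x}^f$ is held fixed while differentiating in the free variable $\mathbf{x}$:
\begin{equation}
\frac{d h(\mathbf{x})}{d\mathbf{x}}\bigg|_{\mathbf{x}=\mathbf{x}^f} = \frac{\partial g}{\partial y}\bigg|_{(\tilde{y}(\mathbf{x}^f;\mathbf{x}^f),\mathbf{x}^f)}\cdot \frac{\partial \tilde{y}(\mathbf{x};\mathbf{x}^f)}{\partial \mathbf{x}}\bigg|_{\mathbf{x}=\mathbf{x}^f} + \frac{\partial g}{\partial \mathbf{x}}\bigg|_{(\tilde{y}(\mathbf{x}^f;\mathbf{x}^f),\mathbf{x}^f)}. \notag
\end{equation}
Then I would invoke the two hypotheses: the value-matching condition $\tilde{y}(\mathbf{x}^f;\mathbf{x}^f)=y(\mathbf{x}^f)$ makes the evaluation points $(\tilde{y}(\mathbf{x}^f;\mathbf{x}^f),\mathbf{x}^f)$ and $(y(\mathbf{x}^f),\mathbf{x}^f)$ identical, so every partial derivative of $g$ appearing above is evaluated at the same point; and the gradient-matching condition $\frac{\partial \tilde{y}(\mathbf{x};\mathbf{x}^f)}{\partial \mathbf{x}}|_{\mathbf{x}=\mathbf{x}^f}=\frac{d y(\mathbf{x})}{d\mathbf{x}}|_{\mathbf{x}=\mathbf{x}^f}$ makes the two sums coincide term by term. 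This yields the claimed identity.

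Honestly there is no real obstacle here — the result is essentially a one-line consequence of the chain rule, and the only point requiring a little care is bookkeeping: making sure that when we differentiate $g(\tilde{y}(\mathbf{x};\mathbf{x}^f),\mathbf{x})$ we differentiate only with respect to the \emph{free} occurrence of $\mathbf{x}$ (both in the first slot via $\tilde{y}$ and in the explicit second slot) and treat the parameter slot $\mathbf{x}^f$ as a constant, because it is only \emph{after} differentiation that we set $\mathbf{x}=\mathbf{x}^f$. One should also implicitly assume the requisite differentiability of $g$, $y$ and $\tilde{y}(\cdot;\mathbf{x}^f)$ near $\mathbf{x}^f$, which holds in our setting since the relevant functions (ratios of quadratics, log-sum-exp, etc.) are smooth on the feasible region. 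With this lemma in hand, the remainder of the convergence proof of Theorem \ref{theorem_convergence} would apply it with $\mathbf{x}=(\mathbf{w},\mathbf{P})$, $y$ the EU SINR terms and $\tilde{y}$ their first-order (CCCP) surrogates, to conclude that a fixed point of Algorithm \ref{inner_algorithm} satisfies the KKT conditions of $\mathcal{P}_S(\bm{\theta},\mathcal{H})$.
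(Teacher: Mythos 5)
Your proposal is correct and mirrors the paper's own argument exactly: both apply the multivariate chain rule to $g(\tilde{y}(\mathbf{x};\mathbf{x}^f),\mathbf{x})$ and to $g(y(\mathbf{x}),\mathbf{x})$, then use the value-matching and gradient-matching hypotheses to identify the two expansions term by term at $\mathbf{x}=\mathbf{x}^f$. Nothing is missing; the differentiability caveat you note is implicit in the paper as well.
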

\begin{proof}
Based on the chain rule, we obtain
\begin{equation}
\begin{aligned}
	 & \frac{d g(\tilde{y}(\mathbf{x}; \mathbf{x}^f),\mathbf{x})}{d \mathbf{x}}\bigg|_{\mathbf{x}=\mathbf{x}^f}\\
	=&\frac{\partial g(\tilde{y},\mathbf{x})}{\partial \tilde{y}}\frac{\partial \tilde{y}(\mathbf{x};\mathbf{x}^f)}{\partial\mathbf{x}}\bigg|_{\mathbf{x}=\mathbf{x}^f,\tilde{y}=\tilde{y}(\mathbf{x}^f;\mathbf{x}^f)}\\
	& +\frac{\partial g(\tilde{y},\mathbf{x})}{\partial\mathbf{x}}\bigg|_{\mathbf{x}=\mathbf{x}^f,\tilde{y}=\tilde{y}(\mathbf{x}^f;\mathbf{x}^f)}\\
	=&\frac{\partial g(y,\mathbf{x})}{\partial y}\frac{\partial y(\mathbf{x})}{\partial\mathbf{x}}\bigg|_{\mathbf{x}=\mathbf{x}^f,y=y(\mathbf{x}^f)} \!+\! \frac{\partial g(y,\mathbf{x})}{\partial\mathbf{x}}\bigg|_{\mathbf{x}=\mathbf{x}^f,y=y(\mathbf{x}^f)}\\
	=&\frac{d g(y(\mathbf{x}),\mathbf{x})}{d \mathbf{x}}\bigg|_{\mathbf{x}=\mathbf{x}^f} = \frac{d f(\mathbf{x})}{d\mathbf{x}}\bigg|_{\mathbf{x}=\mathbf{x}^f},
\end{aligned}
\end{equation}
where the second equality holds because $\tilde{y}(\mathbf{x};\mathbf{x}_1)|_{\mathbf{x}=\mathbf{x}_1}=y(\mathbf{x}_1)$ and $\frac{\partial \tilde{y}(\mathbf{x};\mathbf{x}_1)}{\partial \mathbf{x}}|_{\mathbf{x}=\mathbf{x}_1}=\frac{d y(\mathbf{x})}{d\mathbf{x}}|_{\mathbf{x}=\mathbf{x}_1}$. This thus completes the proof.
\end{proof}
\noindent
Based on Lemma \ref{lemma 2}, it follows that $\frac{\partial g_m}{\partial \mathbf{w}}\big|_{\mathbf{P}=\mathbf{P}^{f}}=\frac{\partial \tilde{g}_m}{\partial \mathbf{w}}\big|_{\mathbf{P}=\mathbf{P}^{f}}$ holds, which can be simply obtained by regarding $-\|\tilde{\mathbf{g}}_{m}^H\mathbf{P}^{f}\|^2+ 2\Re\{\tilde{\mathbf{g}}_{m}^H\mathbf{P}^{f}\mathbf{P}^H\tilde{\mathbf{g}}_{m} \}$ and $\|\tilde{\mathbf{g}}_{m}^H\mathbf{P}\|^2$ as $\tilde{y}$ and $y$ in Lemma \ref{lemma 2}, respectively.
Similarly, we can obtain $\frac{\partial\varepsilon_m}{\partial \mathbf{w}}\big|_{\mathbf{w}=\mathbf{w}^{f}, \mathbf{P} =\mathbf{P}^{f}} \!=\! \frac{\partial\tilde{\varepsilon}_m}{\partial \mathbf{w}}\big|_{\mathbf{w}=\mathbf{w}^{f}, \mathbf{P}=\mathbf{P}^{f}}$, $\frac{\partial\varepsilon_m}{\partial \mathbf{P}}\big|_{\mathbf{w}=\mathbf{w}^{f}, \mathbf{P}=\mathbf{P}^{f}} \!=\! \frac{\partial\tilde{\varepsilon}_m}{\partial \mathbf{P}}\big|_{\mathbf{w}=\mathbf{w}^{f}, \mathbf{P}=\mathbf{P}^{f}}$, $\frac{\partial g_m}{\partial \mathbf{P}}\big|_{\mathbf{P}=\mathbf{P}^{f}} \!=\! \frac{\partial \tilde{g}_m}{\partial \mathbf{P}}\big|_{\mathbf{P}=\mathbf{P}^{f}}$ and $\frac{\partial g_m}{\partial y_m}\big|_{\mathbf{P}=\mathbf{P}^{f}}=\frac{\partial \tilde{g}_m}{\partial y_m}\big|_{\mathbf{P}=\mathbf{P}^{f}}$.
Furthermore, let $\mathbf{S}\triangleq(\mathbf{w},\mathbf{P}, z, u, v, y_m; \mathbf{w}^{f},\mathbf{P}^{f})$ denote the composition of the optimization variables and parameters $\mathbf{w}^{f}$ and $\mathbf{P}^{f}$, and let $\mathbf{S}^\star \triangleq(\mathbf{w}^\star,\mathbf{P}^\star, z^\star, u^\star, v^\star, y_m^\star; \mathbf{w}^{\star},\mathbf{P}^{\star})$ denote the limit point of the sequence $\mathbf{S}^i$ generated by Algorithm \ref{inner_algorithm}. Then, based on abovementioned facts about the gradients and by checking the first-order optimality conditions of problems \eqref{step1_problem}-\eqref{step3_problem}, we have the following equations:
\begin{equation}\label{first_order_start}
	\frac{\partial\bar{E}}{\partial z}\bigg|_{\mathbf{S}=\mathbf{S}^{\star}}=0,\;\frac{\partial \bar{E}}{\partial v}\bigg|_{\mathbf{S}=\mathbf{S}^{\star}}=0,\;
	\frac{\partial\bar{E}}{\partial u}\bigg|_{\mathbf{S}=\mathbf{S}^{\star}}=0,
\end{equation}
\begin{equation}
	\begin{aligned}
	\sum_{m\in \mathcal{M}} & \bigg(\frac{\partial \bar{E}}{\partial y_m} \bigg|_{\mathbf{S}=\mathbf{S}^{\star}} + \lambda_m\frac{\partial \tilde{g}_m}{\partial y_m}\bigg|_{\mathbf{S}=\mathbf{S}^{\star}}\bigg)\\
	&  =\sum_{m\in \mathcal{M}}\bigg(\frac{\partial \bar{E}}{\partial y_m}\bigg|_{\mathbf{S}=\mathbf{S}^{\star}} + \lambda_m\frac{\partial g_m}{\partial y_m}\bigg|_{\mathbf{S}=\mathbf{S}^{\star}}\bigg)=0,
	\end{aligned}
\end{equation}
\begin{equation}
	\begin{aligned}
	\sum_{m\in \mathcal{M}}& \bigg(\frac{\partial \bar{E}}{\partial \mathbf{w}}\bigg|_{\mathbf{S}=\mathbf{S}^{\star}} + \lambda_m\frac{\partial \tilde{g}_m}{\partial \mathbf{w}}\bigg|_{\mathbf{S} =\mathbf{S}^{\star}}\bigg)\\
	& =\sum_{m\in \mathcal{M}}\bigg(\frac{\partial \bar{E}}{\partial \mathbf{w}}\bigg|_{\mathbf{S}=\mathbf{S}^{\star}} + \lambda_m\frac{\partial g_m}{\partial \mathbf{w}}\bigg|_{\mathbf{S}=\mathbf{S}^{\star}}\bigg)=\mathbf{0},
	\end{aligned}
\end{equation}
\begin{equation}
	\begin{aligned}
	\sum_{m \in \mathcal{M}}& \bigg(\frac{\partial \bar{E}}{\partial \mathbf{P}}\bigg|_{\mathbf{S}=\mathbf{S}^{\star}} + \lambda_m\frac{\partial \tilde{g}_m}{\partial \mathbf{P}}\bigg|_{\mathbf{S}=\mathbf{S}^{\star}}\bigg)\\
	& =\sum_{m \in \mathcal{M}}\bigg(\frac{\partial \bar{E}}{\partial \mathbf{P}}\bigg|_{\mathbf{S}=\mathbf{S}^{\star}} + \lambda_m\frac{\partial g_m}{\partial \mathbf{P}}\bigg|_{\mathbf{S}=\mathbf{S}^{\star}}\bigg)=\mathbf{0},
	\end{aligned}
\end{equation}
\begin{equation}
	\begin{aligned}
	\sum_{m\in \mathcal{M}}& \bigg(\frac{\partial \bar{E}}{\partial \mathbf{w}}\bigg|_{\mathbf{S}=\mathbf{S}^{\star}} + \gamma_m\frac{\partial \tilde{\varepsilon}_m}{\partial \mathbf{w}}\bigg|_{\mathbf{S}=\mathbf{S}^{\star}}\bigg)\\
	& =\sum_{m \in \mathcal{M}}\bigg(\frac{\partial \bar{E}}{\partial \mathbf{w}}\bigg|_{\mathbf{S}=\mathbf{S}^{\star}} + \gamma_m\frac{\partial \varepsilon_m}{\partial \mathbf{w}}\bigg|_{\mathbf{S}=\mathbf{S}^{\star}}\bigg)=\mathbf{0},
	\end{aligned}
\end{equation}
\begin{equation}
	\begin{aligned}
	\sum_{m\in \mathcal{M}}& \bigg(\frac{\partial \bar{E}}{\partial \mathbf{P}}\bigg|_{\mathbf{S}=\mathbf{S}^{\star}} + \gamma_m\frac{\partial \tilde{\varepsilon}_m}{\partial \mathbf{P}}\bigg|_{\mathbf{S}=\mathbf{S}^{\star}}\bigg)\\
	& =\sum_{m\in \mathcal{M}}\bigg(\frac{\partial \bar{E}}{\partial \mathbf{P}}\bigg|_{\mathbf{S}=\mathbf{S}^{\star}} + \gamma_m\frac{\partial \varepsilon_m}{\partial \mathbf{P}}\bigg|_{\mathbf{S}=\mathbf{S}^{\star}}\bigg)=\mathbf{0},
	\end{aligned}
\end{equation}
\begin{equation}\label{first_order_end}
	\begin{aligned}
	& \frac{\partial \bar{E}}{\partial \mathbf{w}}\bigg|_{\mathbf{S}=\mathbf{S}^{\star}} + \delta\frac{\partial P_{c}}{\partial \mathbf{w}}\bigg|_{\mathbf{S}=\mathbf{S}^{\star}}=\mathbf{0},\\
	& \frac{\partial \bar{E}}{\partial \mathbf{P}}\bigg|_{\mathbf{S}=\mathbf{S}^{\star}} + \delta\frac{\partial P_{c}}{\partial \mathbf{P}}\bigg|_{\mathbf{S}=\mathbf{S}^{\star}}=\mathbf{0},
	\end{aligned}
\end{equation}
where $\{\lambda_m\}$, $\{\gamma_m\}$ and $\delta$ are the Lagrange multipliers associated with the constraints \eqref{SINR_constraint1}, \eqref{EH_constraint1} and \eqref{power_constraint}, respectively, which satisfy
\begin{equation}\label{dual_feas}
	\lambda_m\geq 0,\; \gamma_m\geq 0,\;\forall m \in \mathcal{M},\; \delta\geq 0.
\end{equation}
Moreover, based on the complementary slackness condition of problems \eqref{step1_problem}-\eqref{step3_problem}, i.e., $\lambda_m $ $\tilde{g}_m(y_m^{\star},\mathbf{w}^{\star},\mathbf{P}^{\star}; \mathbf{P}^{\star})=0$, $
\gamma_m\tilde{\varepsilon}_m(\mathbf{w}^{\star},\mathbf{P}^{\star}; \mathbf{w}^{\star},\mathbf{P}^{\star})=0, \forall m \in \mathcal{M}$, $ \delta P_{c}=0$, and the fact $\tilde{g}_m(y_m^{\star},$ $\mathbf{w}^{\star},\mathbf{P}^{\star}; \mathbf{P}^{\star})=g_m(y_m^{\star},\mathbf{w}^{\star},\mathbf{P}^{\star})$, $\tilde{\varepsilon}_m(\mathbf{w}^{\star},\mathbf{P}^{\star}; \mathbf{w}^{\star},\mathbf{P}^{\star})=\varepsilon_m(\mathbf{w}^{\star},\mathbf{P}^{\star}), \forall m \in \mathcal{M}$, we have
\begin{equation}\label{slack_condition}
	\begin{aligned}
		&\lambda_m\tilde{g}_m(y_m^{\star},\mathbf{w}^{\star},\mathbf{P}^{\star}; \mathbf{P}^{\star})=\lambda_mg_m(y_m^{\star},\mathbf{w}^{\star},\mathbf{P}^{\star})=0,\\
		&\gamma_m\tilde{\varepsilon}_m(\mathbf{w}^{\star},\mathbf{P}^{\star}; \mathbf{w}^{\star},\mathbf{P}^{\star})=\gamma_m\varepsilon_m(\mathbf{w}^{\star},\mathbf{P}^{\star})=0, \\
		& \forall m \in \mathcal{M}, \; \delta P_{c}=0.
	\end{aligned}
\end{equation}

To summarize, we can see that \eqref{first_order_start}-\eqref{first_order_end} are the first-order optimality conditions of problem \eqref{trans_prob3}, while \eqref{dual_feas} and \eqref{slack_condition} are respectively the dual feasibility and  complementary slackness conditions of problem \eqref{trans_prob3}. Together with the primal feasibility proved before, we can conclude that the limit point $\mathbf{S}^\star$ is a stationary solution of problem \eqref{trans_prob3}. Finally, based on Theorem 3 in \cite{Shi2011WMMSE}, we can infer that any limit point generated by Algorithm \ref{inner_algorithm} is also a stationary solution of $\mathcal{P}_S(\bm{\theta}, \mathcal{H})$, which thus completes the proof.

\section{Derivation of the partial derivatives}\label{derivative}
Since $\bm{\theta} = \angle \bm{\phi}$, we can express the gradient of $\bar{S}$ w.r.t. $\boldsymbol{\theta}$, i.e., $\frac{\partial \bar{S}}{\partial \bm{\theta}}$, as a function of $\frac{\partial \bar{S}}{\partial \bm{\phi}}$, which is shown as follows:
\begin{equation}\label{gradient_theta}
	\frac{\partial \bar{S}}{\partial \bm{\theta}}=\frac{\partial \bar{S}}{\partial \boldsymbol{\phi}}\circ \jmath \boldsymbol{\phi}-\frac{\partial \bar{S}}{\partial \boldsymbol{\phi}^*}\circ \jmath \boldsymbol{\phi}^*.
\end{equation}
Then, we can easily obtain $\frac{\partial \bar{S}}{\partial \bm{\phi}}$ as in \eqref{gradiant_phi} (shown at the top of this page)
\begin{figure*}[ht]
	\begin{equation}\label{gradiant_phi}
		\begin{aligned}
			\frac{\partial \bar{S}}{\partial \boldsymbol{\phi}} =& \frac{1}{1+\textrm{SINR}^{\textrm{I}}(\mathbf{w}, \mathbf{P}, \bm{\theta})} \left(\frac{\big((\|\tilde{\mathbf{h}}^H\mathbf{P}\|^2+\sigma_{I}^2)\tilde{\mathbf{h}}^H\mathbf{ww}^H-|\tilde{\mathbf{h}}^H\mathbf{w}|^2\tilde{\mathbf{h}}^H\mathbf{PP}^H\big)\mathbf{F}_1\text{diag}(\mathbf{h}_2)}{(\|\tilde{\mathbf{h}}^H\mathbf{P}\|^2+\sigma_{I}^2)^2}\right)^T\\
			& -\frac{1}{\sum_{m\in \mathcal{M}}(1+\textrm{SINR}_m^{\textrm{E}}(\mathbf{w}, \mathbf{P}, \bm{\theta}))^p}\sum_{m \in \mathcal{M}}\bigg((1+\textrm{SINR}_m^{\textrm{E}}(\mathbf{w}, \mathbf{P}, \bm{\theta}))^{p-1}\\
			& \times \frac{\big( (\|\tilde{\mathbf{g}}_m^H\mathbf{P}\|^2+\sigma_{E,m}^2)\tilde{\mathbf{g}}_m^H\mathbf{ww}^H-|\tilde{\mathbf{g}}_m^H\mathbf{w}|^2\tilde{\mathbf{g}}_m^H\mathbf{PP}^H \big)\mathbf{F}_1\text{diag}(\mathbf{g}_{2,m})}{(\|\tilde{\mathbf{g}}_m^H\mathbf{P}\|^2+\sigma_{E,m}^2)^2}\bigg)^T,
		\end{aligned}
	\end{equation}
	\hrulefill
\end{figure*}
and similarly, we can obtain $\frac{\partial \bar{S}}{\partial \bm{\phi}^*}$. By substituting $\frac{\partial \bar{S}}{\partial \bm{\phi}}$ and $\frac{\partial \bar{S}}{\partial \bm{\phi}^*}$ into \eqref{gradient_theta}, we can finally have the expression of $\frac{\partial \bar{S}}{\partial \bm{\theta}}$ and the value of $\frac{\partial \bar{S}}{\partial \boldsymbol{\theta}}\big|_{(\mathbf{w}^J(\bm{\theta}^t,\mathcal{H}_j^t), \mathbf{P}^J(\bm{\theta}^t,\mathcal{H}_j^t),\boldsymbol{\theta}^t, \mathcal{H}^t_j)}$ can be obtained accordingly.
\end{appendices}

\bibliographystyle{IEEEtran}
\bibliography{bib}

\end{document}